\documentclass[pdflatex,sn-mathphys-num]{sn-jnl}

\usepackage{makecell}
\usepackage{tabularray}
\usepackage{graphicx}
\usepackage{multirow}
\usepackage{amsmath,amssymb,amsfonts}
\usepackage{amsthm}
\usepackage{mathrsfs}
\usepackage[title]{appendix}
\usepackage{xcolor}
\usepackage{ulem}
\usepackage{textcomp}
\usepackage{manyfoot}
\usepackage{booktabs}
\usepackage{algorithm}
\usepackage{algorithmicx}
\usepackage{algpseudocode}
\usepackage{listings}
\usepackage{xcolor}
\usepackage{amsmath,mathtools,amssymb,amsthm}
\usepackage{tikz}
\usepackage{physics}
\usepackage{comment}
\usepackage{booktabs}
\usepackage{tabularx}

\theoremstyle{thmstyleone}
\newtheorem{theorem}{Theorem}

\theoremstyle{thmstyletwo}

\theoremstyle{thmstylethree}
\newtheorem{definition}{Definition}

\begin{document}

\title{Scalable quantum circuit generation for iterative ground state approximation using Majorana Propagation}

\author[1]{\fnm{Rahul} \sur{Chakraborty}}\equalcont{These authors contributed equally to this work.}
\author[1]{\fnm{Aaron} \sur{Miller}}\equalcont{These authors contributed equally to this work.}
\author[1]{\fnm{Anton} \sur{Nyk\"anen}}\equalcont{These authors contributed equally to this work.}
\author*[1]{\fnm{\"Ozlem} \sur{Salehi}}
\equalcont{These authors contributed equally to this work.} \email{ozlem.salehi@algorithmiq.fi}
\author[1]{\fnm{Fabio} \sur{Tarocco}}
\author[1]{\fnm{Fabijan} \sur{Pavo\v{s}evi\'c}}
\author[1]{\fnm{Pi A. B.} \sur{Haase}}
\author[1]{\fnm{Martina} \sur{Stella}}
\author[1]{\fnm{Adam} \sur{Glos}}

\affil[1]{\textit{Algorithmiq Ltd, Kanavakatu 3 C, FI-00160 Helsinki, Finland}}

\abstract{ 
We introduce the Adaptive Derivative-Assembled Pseudo-Trotter ansatz Variational Majorana Propagation Eigensolver (ADAPT-VMPE), a quantum-inspired classical algorithm that exploits Majorana Propagation (MP) to produce circuits for approximating the ground state of molecular Hamiltonians. Equipped with the theoretical guarantees of MP, which provide controllable bounds on the approximation error, ADAPT-VMPE offers an efficient and scalable approach for iterative ansatz construction. A theoretical analysis of the computational complexity demonstrates that it is polynomial in both the number of qubits and the number of iterations. We present an in-depth analysis of circuit construction strategies, analyzing their impact on convergence and provide practical guidance for efficient ansatz generation. Using ADAPT-VMPE, we construct up to 100-qubit ans\"atze for a strongly correlated photosensitizer currently undergoing human clinical trials for cancer treatment. Our results demonstrate that constant overlap with the ground state across system sizes can be reached in polynomial time with polynomially sized circuits.}

\maketitle

Understanding and accurately modeling quantum many-body systems lies at the heart of many fundamental problems in physics and chemistry. This task is notoriously difficult for classical computers due to the exponential growth of the underlying Hilbert space with increasing system size. Quantum computing offers a promising alternative for overcoming this limitation by leveraging quantum mechanical principles to efficiently represent and simulate complex quantum systems that are otherwise intractable on classical hardware~\cite{cao2019quantum, mcardle2020quantum}. Among the various application domains of quantum computing, quantum chemistry is often regarded as one of the most promising candidates for achieving practical quantum advantage \cite{lee2023evaluating, hoefler2023disentangling,chan2024quantum}, as those problems naturally fit quantum hardware~\cite{aspuru2005simulated,elfving2020will}. 

In quantum chemistry, most applications ultimately reduce to solving the electronic structure problem \cite{helg00}, which involves estimating ground- and excited-state properties of a Hamiltonian describing the system's dynamics. While numerous classical approaches exist, qubit-based quantum computing offers alternative strategies. A common approach is to prepare a quantum circuit that approximates the ground state of the Hamiltonian. One of the earliest and most prominent algorithms for this task is the Quantum Phase Estimation (QPE) algorithm~\cite{kitaev1995quantum}. Given an initial state with non-zero overlap with an eigenstate of the Hamiltonian, QPE can estimate the corresponding eigenvalue with high precision while projecting the state onto that eigenstate. In particular, if the initial state has sufficient overlap with the ground state, QPE can yield an accurate estimate of the ground-state energy. However, the deep circuits required by QPE make it impractical for near-term quantum hardware. As an alternative, the Variational Quantum Eigensolver (VQE)~\cite{peruzzo2014variational} and its extensions~\cite{aspuru2005simulated, lee2018generalized, motta2023bridging, kandala2017hardware, burton2024accurate, grimsley2019adaptive, ryabinkin2018qubit, ryabinkin2020iterative} were proposed for the noisy intermediate-scale quantum (NISQ) era~\cite{preskill2018quantum, zimboras2025myths}. VQE is a hybrid quantum--classical algorithm in which a parameterized quantum circuit (ansatz) prepares trial states, while a classical optimizer iteratively updates the parameters to minimize the expectation value of the Hamiltonian measured on a quantum device.

On a fault-tolerant quantum computer, QPE could potentially provide an exponential advantage over classical algorithms for ground state preparation. However, this framework presents a challenge which is often overlooked. The success probability of the algorithm depends on the overlap between the initial state and the target eigenstate. While the Hartree-Fock determinant overlaps well enough with the ground state for small systems, this may not be the case for large or correlated systems. The exponential decay of the overlap is known as the \textit{orthogonality catastrophe}~\cite{tubman2018postponing} and has significant consequences. For instance, in Ref.~\cite{lee2023evaluating}, the authors show that two different initial state preparation methods based on Slater determinants can yield only $10^{-7}$ overlap for FeMoCo, an iron sulfur cluster containing 8 transition metal atoms. This would require $10^7$ repetitions of the QPE procedure, which would take a lot of time on quantum hardware. This presents the need for initial state preparation methods that are able to produce at least polynomially decreasing overlap in the system size. On the other hand, such initial state preparation methods should be efficient, that is, they should run in polynomial time, to enable significant quantum advantage. 

Heuristic state preparation methods are therefore likely to become essential to quantum simulation. Originally conceived in the context of near-term quantum computing, where resources are particularly scarce, VQE can prove a valuable source of inspiration for such heuristics. However, despite its promise for near-term quantum hardware, VQE faces several fundamental challenges that limit its practical deployment on current quantum devices. Hardware constraints such as noise, limited qubit connectivity, and statistical errors associated with the estimation of expectation values can significantly degrade performance and hinder convergence, particularly for large-scale systems~\cite{tilly2022variational}. Moreover, as the system size or circuit depth increases, the gradient of the cost function can vanish exponentially, a phenomenon known as barren plateaus~\cite{wang2021noise, larocca2025barren}, making the classical optimization step extremely difficult. In addition, each step requires the measurement of complex observables to high precision, a notoriously difficult task on quantum computers~\cite{garcia2021learning,cao2019quantum}, making VQE costly not only in the NISQ regime but also in the fault-tolerant (FT) era. Under these constraints, efficient and noise-robust ground state preparation becomes critical in the NISQ regime, as limited coherence times and high error rates restrict circuit depth, and poor states can further amplify measurement overhead and optimization difficulties in variational algorithms.

Consequently, developing scalable ground state preparation methods using only classical resources is a foundational requirement for achieving practical quantum advantage across both near-term and fault-tolerant quantum computers. In the literature, various methods of this kind have been proposed. Ground state preparation methods that synthesize quantum circuits based on the sum of Slater determinants~\cite{tubman2018postponing, fomichev2024initial} or matrix product state (MPS)~\cite{khan2023pre, malz2024preparation} have been mostly considered for molecules of small size, or require a vast amount of resources~\cite{berry2025rapid}. In Ref.~\cite{mullinax2025large, mullinax2025classical, hirsbrunner2024beyond}, the authors consider sparse wave-function simulation (SWCS) for directly constructing circuits. In addition, the unitary cluster Jastrow ansatz~\cite{motta2023bridging} can be warm-started based on Coupled Cluster Singles and Doubles~\cite{Bartlett1982} calculations that are routinely feasible on classical computers for small and moderately sized molecular systems. Classical simulation of the iterative qubit coupled cluster (iQCC) method and variational double bracket flow (vDBF)~\cite{steiger2024sparse, ryabinkin2025optimization, genin2025towards, shrikhande2025rapid} classically evolves the Hamiltonian to iteratively construct an ansatz. Recently, new experiments with iQCC~\cite{genin2025towards, jenab2026parallel} and with vDBF~\cite{shrikhande2025rapid} were made for up to $\approx 200$ and 128 qubits, respectively.

In this paper, we introduce a novel state preparation algorithm, the ADAPT Variational Majorana Propagation Eigensolver (ADAPT-VMPE). ADAPT-VMPE is an efficient and customizable classical algorithm for ground state preparation. It builds an ansatz iteratively, in the spirit of the ADAPT-VQE algorithm~\cite{grimsley2019adaptive}, by growing the ansatz one gate at a time, selecting elements that result in the greatest improvement from a predefined pool of gates. ADAPT-VMPE runs entirely in Fermionic space and thus constructs a Fermionic ansatz, delaying the choice of the Fermion-to-qubit (F2Q) mapping until the circuit is executed on a quantum computer. At the gate selection and parameter optimization stages, ADAPT-VMPE removes the need for quantum resources by exploiting the Majorana Propagation (MP) algorithm~\cite{miller2025simulation} for approximating the expectation value of the given Hamiltonian with respect to the quantum circuit. 
The approximation error arising from MP is theoretically guaranteed to decay exponentially in its cutoff parameter for unstructured circuits, and computing expectation values to a fixed accuracy requires only polynomial resources. \textit{This makes ADAPT-VMPE a scalable algorithm whose convergence stands on solid theoretical grounds.} This favorable scaling enables studying more complex molecular systems and supports hybrid classical-quantum strategies, suggesting potential for quantum advantage. 

In the NISQ era, ADAPT-VMPE can be viewed as a stand-alone ground state energy approximation algorithm when combined with some post-processing methods such as the Virtual Linear Map Algorithm (VILMA) to improve convergence to the target state~\cite{2022VILMA}. Yet the applicability of ADAPT-VMPE is not limited to the NISQ era, as it can also be used as a ground state preparation algorithm for algorithms like QPE in the FT era. ADAPT-VMPE can generate high-overlap states at attainable circuit cost, which makes it a strong candidate for ground state preparation. We showcase our results for a derivative of the TLD-1433 molecule~\cite{McFarland2019}, a ruthenium-based (Ru(II)) bipyridine complex currently undergoing human clinical trials for non-muscle invasive bladder cancer, which we refer to as Complex A throughout the text. In particular, we show that constant overlap with the ground state across system sizes can be reached in polynomial time, demonstrating the scalability of our method.

Our scheme distinguishes itself from existing iterative ground-state preparation methods in several ways. It is tailored to construct an ansatz in the Fermionic space, which has proven to have better convergence properties than those constructed in the qubit space~\cite{miller2024treespilation}. The use of Majorana Propagation allows efficient approximation of multiple expectation values, enabling optimization of all parameters in the ansatz at each iteration. This in turn allows the incorporation of orbital rotations~\cite{burton2022exact,fitzpatrick2024self}, in the present case restricted to the active space, as a part of the ansatz, significantly enhancing the performance of the algorithm, which from now on we will call active rotations. ADAPT-VMPE also allows the placement of newly selected gates at any point in the ansatz, providing essential performance improvements for certain molecules. Regarding these aspects, we present a comparison highlighting the advantages of our method against other iterative ansatz construction algorithms.

\section{Results}\label{sec:results}
\begin{figure}
    \centering
    \includegraphics[width=\linewidth]{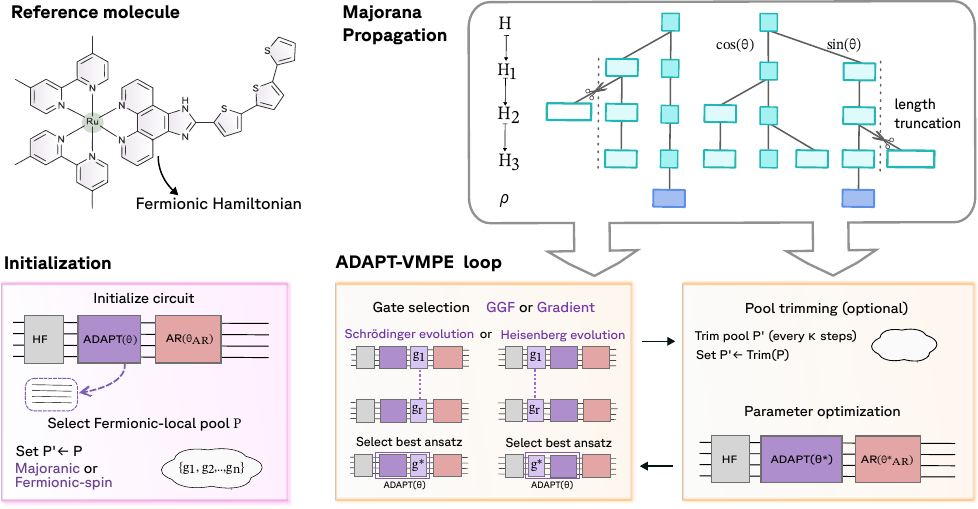}
    \caption{\textbf{Overview of the ADAPT-VMPE algorithm for constructing a fermionic ansatz from a molecular Hamiltonian.} The algorithm is initialized with Hartree–Fock state, followed by active rotations. A Fermionic-local operator pool is defined, and the method proceeds iteratively. At each iteration, operators in the pool are evaluated using GGF or gradient selection, and the selected operator is appended to the circuit in either the Schrödinger or Heisenberg representation. Pool trimming may be applied to reduce the number of candidate operators. After each addition, all variational parameters are optimized before the next iteration. The whole pipeline works in polynomial time thanks to using Majorana Propagation in gate selection and parameter optimization.}
    \label{fig:placeholder}
\end{figure}
\subsection{Introduction to ADAPT-VMPE} 

Given a second-quantized Hamiltonian, state preparation algorithms aim to construct a quantum state that approximates the ground state. Among the various classical, quantum, and hybrid methods for state preparation, ADAPT-VMPE falls into the category of those that construct a quantum state iteratively. In this category, one typically starts with a simple ansatz (often the one implementing the Hartree-Fock determinant), and gradually adds parameterizable gates in order to allow convergence to the ground state.
Several key components and properties that impact both the convergence and performance can be identified. First and foremost, the pool must be specified, a set of gates that can be added to the ansatz. Even once such a pool is specified, one still needs to decide by what criterion the best gate or gates are selected, and where the gate should be placed in the current ansatz. Once the ansatz is temporarily fixed, some of the parameters may be optimized to further decrease the energy of the corresponding state. In the following paragraphs, we focus on some aspects of the iterative procedure.

\subparagraph{Pool selection}  The choice of pool plays a vital role: we can roughly split the most commonly used pools into two categories. The first comprises gates that directly generate a physical evolution of the wavefunction~\cite{grimsley2019adaptive,tsuchimochi2022adaptive,miller2024treespilation}. Such gates are typically defined compactly in the Fermionic space via excitation and annihilation operators or via Majorana operators, which is why we call them Fermionic-local. The second category consists of gates that are qubit-local operations: those gates act only on a bounded number of qubits (typically no more than four), guaranteeing efficient implementation~\cite{tang2021qubit,yordanov2021qubit,ramoa2025reducing}. Usually, they are constructed by taking Fermionic-local operators in the Jordan-Wigner mapping and dropping the long $Z$-chain operators. While looking simple in qubit space, they no longer describe a simple physical evolution, eventually decreasing the performance: this can be observed during ADAPT-VQE optimization as presented in Fig.~\ref{fig:Fermionic-vs-nonFermionic}(a), where the Fermionic-spin pool outperforms its qubit-local counterpart QEB, and the Majoranic pool outperforms its qubit-local counterpart qubit-ADAPT. 

Qubit-local pools were introduced mostly because their implementation on a quantum computer is very efficient, requiring only $\order{1}$ CNOTs per gate. However, using recent findings, this advantage is largely eliminated: for all-to-all connectivity, selecting the optimal ternary-tree mapping~\cite{jiang2020optimal} imposes a requirement of only $\order{\log N}$ per Fermionic-local gate. The difference is even less striking for limited connectivity: for 2D structures like grids or heavy-hexagonal graphs, nodes are typically at average node-distance $\order{\sqrt N}$ for $N$-node graphs, requiring costly SWAPs of similar order. The same complexity can be provably achieved with an appropriate F2Q mapping~\cite{miller2023bonsai}. Taking this into account alongside the convergence rate, one can conclude that Fermionic-local operators are the preferable choice.

\subparagraph{Gate selection and parameter optimization} The type of gates is not the only thing to specify; it is also essential to decide how gate is selected and where they should be placed at each step. In the original ADAPT-VQE paper, as well as in many subsequent variants, the quality of the gates is evaluated by computing the gradient at parameter $\theta=0$, as if the gate would be placed at the end of the ansatz (which we call ansatz evolution in the Schr\"odinger picture)~\cite{grimsley2019adaptive}. This choice is important for variants of ADAPT-VQE that rely on a quantum computer: it can be implemented by measuring the quantum state on a quantum computer and estimating the expectation value of the commutator of the gate generator and the Hamiltonian. However, when classical simulators are used, we can be much more flexible in this choice. For example, in Ref.~\cite{brown2025iterative,shrikhande2025rapid}, gates were always added at the beginning of the ansatz in order to fit the simulator used, which performed the evolution in the Heisenberg picture. The impact of this choice can already be seen for the 16-qubit Hydrogen chain as presented in Fig.~\ref{fig:Fermionic-vs-nonFermionic}(b), and has also been recently analyzed in depth~\cite{stadelmann2025strategies}, which, following the intuition, shows that extending the ansatz in the Schr\"odinger picture performs better. It is worth noting that this may depend on the molecule, and e.g. in Fig.~\ref{fig:Fermionic-vs-nonFermionic}(d) same comparison done for Complex A is not showing any difference between Schr\"odinger and Heisenberg evolution. Furthermore, instead of selecting the gate based on the gradient, one can very efficiently estimate the maximal energy improvement of each gate, e.g., via greedy gradient-free (GGF) selection~\cite{feniou2025greedy} for most of the pools. A comparison between gradient selection and GGF is presented in Fig.~\ref{fig:Fermionic-vs-nonFermionic}(d), where the latter outperforms the former.

\begin{figure}[h]
\includegraphics[width=\linewidth]{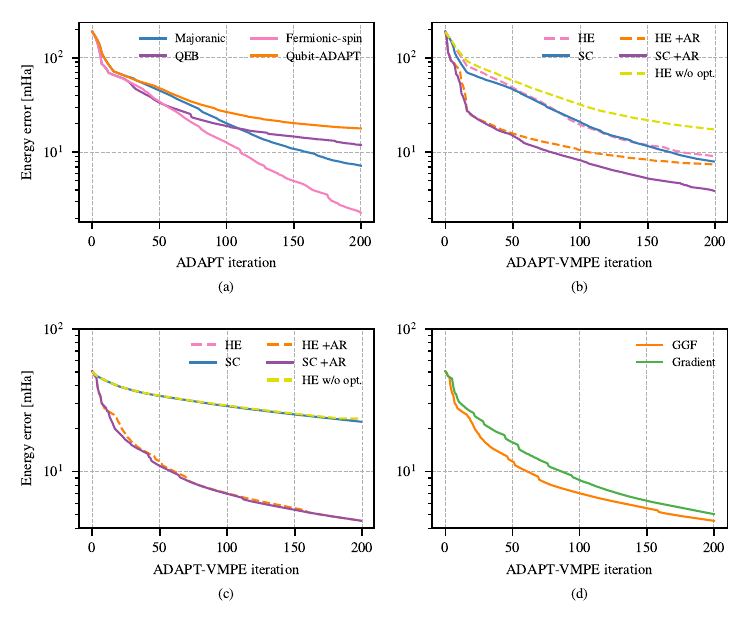}
\caption{\textbf{ADAPT-VQE simulations comparing the different strategic choices for the iterative ansatz generation procedure.} (a) ADAPT-VQE simulations performed with different pools for 16 qubit H-chain molecule. (b) and (c) ADAPT-VQE simulations run with Majorana Propagation simulator performed in different settings for 16 qubit H-chain (b) and 40 qubit Complex A (c). Heisenberg and Schrödinger versions have been considered with and without active rotations. The whole ansatz is optimized at each step except the `HE w/o AR w/o opt.' where no optimization is performed at all. In all simulations, the GGF selection method is used for gate selection, and the L-BFGS-B algorithm is used for optimization. (d) ADAPT-VQE simulation run with Majorana Propagation simulator for 40 qubit Complex A. The Heisenberg version has been considered with active rotations. The whole ansatz is optimized at each iteration using the L-BFGS-B algorithm.}
\label{fig:Fermionic-vs-nonFermionic}
\end{figure}

Parameter optimization is a step typically performed for all variational algorithms; however, for adaptive ones, it has some notable differences. First, parameter optimization is typically warm-started with the values from previous iterations, greatly speeding up and simplifying the process. For newly added gates, the parameter is set either to 0 or to the optimal value minimizing the energy from the GGF evaluation. Since the latter already yields a remarkable energy improvement, it has been suggested many times in the literature that reoptimizing previously added parameters may not be needed, which would result in a substantial speed-up. While the speed-up is indeed easily observed, this strategy greatly degrades the convergence, as previously added gates still have significant potential to improve the state, as observed in Ref.~\cite{miller2024treespilation} and presented in Fig.~\ref{fig:Fermionic-vs-nonFermionic}(b). Therefore, optimizing all parameters is a highly impactful decision and should be preferred when targeting shallow circuits representing particularly low-energy states. 

Recently, active rotations were proposed as a subroutine for classical quantum chemistry methods like DMRG~\cite{FOO_DMRG}, as well as in a VQE-like context~\cite{wahtor2022}, to improve algorithm performance. While the true ground state is a fixed point for any active rotation transformation (Equations 12.2.55 and 12.2.56 in Ref.~\cite{helg00}), optimizing those rotations has a significant impact on the process of convergence to that state, as shown in Fig.~\ref{fig:Fermionic-vs-nonFermionic}(b) and (c). Being able to reoptimize all parameters at each step makes active rotations a natural part of the ansatz in ADAPT-VMPE routine. Furthermore, active rotations can be dressed to the Hamiltonian resulting in a different second quantized Hamimltonian. Hence, they do not introduce any additional gate cost during transpilation, while improving the convergence significantly.

\subparagraph{ADAPT-VMPE} Having identified many of the critical aspects of how such iterative procedures should be defined, equipped with the analysis above, we are now ready to propose ADAPT-VMPE, an iterative method using only classical resources. As the name suggests, our circuit optimization relies on Majorana Propagation, an algorithm for efficient energy estimation of the circuit. The algorithm starts with a circuit composed of the Hartree-Fock determinant circuit and a sequence of single-excitation operators, which play the role of active rotations. ADAPT-VMPE iteratively loops over a pool composed of exponentiations of generators being linear combinations of low-length Majorana monomials, such as Majoranic or Fermionic-spin pools, and selects the gate based on gradient or GGF evaluation anywhere between the Hartree-Fock part and the active rotations. This process is repeated for $K$ iterations; after each gate addition, parameter optimization is performed on the currently constructed circuit of the form
\begin{equation} \label{eq:adapt-vmpe-output}
    \ket{\psi} = \prod_{q,q'} \exp(\theta_{q,q'} (a_q^\dagger a_{q'}- a_{q'}^\dagger a_q)) \prod_{k=1}^K \exp(\mathrm  \theta_k G_k) \ket{\rm HF},
\end{equation}
where $\theta$ are optimizable parameters and $G_k$ are generators selected during the pool selection procedure. The active rotations (left product) can be extracted at any point and efficiently dressed onto the input Hamiltonian; however, since Majorana Propagation can implement single excitations without any truncation (and thus without introducing error), it can compute the expectation value of the full state $\ket{\psi}$ with respect to the Hamiltonian without requiring dressing as a separate process.

Using Majorana Propagation not only allows us to implement all the techniques mentioned above, but also provides several vital benefits. Any classical simulator necessarily induces some approximation error (a difference between the true and estimated energy of the state) if it is to operate with polynomial resources. Majorana Propagation achieves exponential approximation error suppression using only polynomial resources. In addition, by construction, the landscape generated by Majorana Propagation is smooth in the optimized $\theta$ parameters, which simplifies the optimization process. Majorana Propagation also allows constructing a surrogate graph that depends only on the ansatz structure, not on the values of $\theta$: such a graph, once constructed, can be reused for energy and gradient computation. As we show in Appendix~\ref{sec:majoranaprop}, while a surrogate graph may take 150 seconds to construct for 76 qubits and 300 gates, it requires only 1.3 seconds for computing energy and gradient. This type of performance is not available for coefficient-based truncation schemes as proposed in~\cite{mullinax2025classical, brown2025iterative}. The surrogate graph can also be easily extended in either the Heisenberg or Schr\"odinger picture: the former was explained in the original paper~\cite{miller2025simulation}, and the extension to the Schr\"odinger picture is described in Appendix~\ref{sec:majoranaprop}.

\subparagraph{Comparison against other methods} Let us compare ADAPT-VMPE against other iterative algorithms that use only classical resources: iterative qubit coupled cluster (iQCC)~\cite{ryabinkin2020iterative}, variational double bracket flow (vDBF)~\cite{shrikhande2025rapid}, and sparse wavefunction circuit solver ADAPT-VQE (SWCS)~\cite{mullinax2025classical}. iQCC is an algorithm that adaptively adds qubit-local operations, evolving the ansatz in the Heisenberg picture. The Hamiltonian is truncated by disregarding Pauli operators with small coefficients and the landscape is optimized with a surrogate function built on top of it. The pool is not fixed; instead, it is determined at each iteration based on the evolved Hamiltonian. vDBF is an algorithm similar to iQCC, except the gates considered are Fermionic-local, and only the newly added gate is optimized. Both algorithms were recently tested for large systems, at 200 and 128 qubits, respectively. SWCS presents a different concept, where it is the state that is evolved, and computational basis states are removed if the corresponding amplitudes are of small magnitude. This makes SWCS a special type of SCI algorithm aimed at constructing a circuit directly instead of just a combination of Slater determinants. The original paper employed a Fermionic-spin pool, which is Fermionic-local~\cite{mullinax2025classical}; more recently, the same algorithm has been applied using the coupled exchange operators (CEO) pool~\cite{mullinax2025classical,ramoa2025reducing}, which is qubit-local.

We summarize the differences between these methods in Table~\ref{tab:summary}, comparing all aspects discussed in the previous section. Among other differences, ADAPT-VMPE is the only method that uses active rotations to improve the convergence rate. It is worth noting that active rotations appear to be particularly difficult to combine with iQCC or vDBF, where the evolution of the ansatz is made in the Heisenberg picture; for these algorithms, elements added first to the ansatz (closest to the Hamiltonian) are no longer optimized. Because of this, one cannot place active rotations next to the Hamiltonian and reoptimize them alongside the iterative ADAPT ansatz construction. As shown in Fig.~\ref{fig:Fermionic-vs-nonFermionic}, the ability to use and optimize active rotations has an outstanding effect on the convergence rate. 

We would like to point out that neither iQCC nor vDBF works with a fixed pool; instead, both generate the pool on the fly based on the evolved Hamiltonian, a concept we call `pool synthesis' in this paper. From our analysis of the iQCC results for N$_2$~\cite{steiger2024sparse}, we can see that a large majority of the gates selected are at most 4-local Pauli exponentiations: only 78 out of 1380 gates added were defined over more than 4 qubits. These gates, with proper corrections with Z Pauli operators, can be made a 4-length Majorana monomial, e.g., a Fermionic local gate taken from a double excitation. One should mention that the presence or absence of $Z$ strings has \textit{no effect on the pool selection} when done in the Heisenberg picture, since $Z$ strings commute with the Hartree-Fock state, as already noted in~\cite{ryabinkin2020iterative}. Since pool selection in the Heisenberg picture cannot distinguish between qubit-local and Fermionic-local operations, we claim that pool synthesis would effectively select mostly Fermionic-local operations anyway, ignoring higher-local Pauli gates. The other gates selected were at most 8-local for a 56-qubit system, which would correspond to quadruple excitation Fermionic gates. Based on the conclusion that pool synthesis would select Fermionic-local operators, we claim that it fits well with ADAPT-VMPE, where Fermionic-local gates are selected for high-quality energy approximation with Majorana Propagation. Our initial investigation confirms that this technique greatly reduces the time needed for evaluating the full pool, as it can extract only the relevant gates to be added by analyzing the evolved Hamiltonian.

\newcommand{\yes}{{\Large \textcolor{black!100!green}{${\checkmark}$}}}
\newcommand{\no}{{\Large\textcolor{black!100!red}{${\times}$}}}
\begin{table}[]
    \centering
    \begin{tabular}{@{}p{2.5cm}@{}lcccc@{}}
    \toprule\small
    && iQCC & vDBF & SWCS & \makecell{ADAPT-VMPE} \\\midrule
    \multicolumn{2}{@{}c}{\makecell[l]{continuous parameter \\landscape}} & \yes  & \no & \no & \yes  \\[5pt]
    \multicolumn{2}{@{}c}{\makecell[l]{active rotations}} & \no  & \no & \no & \yes  \\[5pt]
    \multicolumn{2}{@{}c}{\makecell[l]{Fermionic-local operations}} & \no  & \yes & \yes & \yes  \\[5pt]
    \makecell[l]{evolution} && Heisenberg& Heisenberg & Schr\"odinger & \makecell{Heisenberg \&\\ Schr\"odinger}
    \\[8pt]
    \multicolumn{2}{@{}c}{\makecell[l]{pool gate selection}} & gradient  & gradient & gradient & \makecell[c]{gradient \&\\ GGF} \\[8pt]
    \multicolumn{2}{@{}c}{\makecell[l]{pool synthesis}} & \yes  & \yes & \no & \makecell{Under\\investigation} \\[8pt]
    \multirow{2.7}*{\makecell[l]{CNOT per gate\\(typical, worst)}}  & heavy-hex & $\order{\sqrt N}$, $\order{N}$ & $\order{\sqrt N}$, $\order{N}$ & $\order{\sqrt N}$ & $\order{\sqrt N}$
    \\[5pt]
     & all-to-all & $\order{1}$, $\order{N}$ & $\order{\log N}$, $\order{N}$ & $\order{1}$ & $\order{ \log N}$
    \\[5pt]
    \makecell[l]{RZ per gate} && 1 & 1 & 4 to 8 & 1
    \\\bottomrule
    \end{tabular}
    \caption{\textbf{Summary of the comparison between iQCC, vDBF, SWCS, and ADAPT-VMPE iterative algorithms.} To our knowledge, ADAPT-VMPE is the only method that attains all the positive properties listed above, which impacts convergence rate and the number of CNOTs or T gates. In particular, it is the only algorithm for which the original energy landscape is continuous, and that uses active rotations simultaneously. The only property of ADAPT-VMPE not presented in this paper is pool synthesis, which is currently under investigation (initial analysis suggests a performance gain with no quality drop). For all-to-all connectivity, ADAPT-VMPE is only logarithmically far from SWCS and iQCC, which in our opinion is negligible given the improvement in convergence. The continuous parameter landscape for iQCC was achieved by building a further surrogate function on top of the one constructed by coefficient truncation; see Ref.~\cite{ryabinkin2025optimization} for details. See Appendix~\ref{appendix:transpilation-cost-analysis} for CNOT and RZ cost analysis for all considered methods. Both Fermionic-local~\cite{mullinax2025classical} and qubit-local~\cite{larose2026cost} pools have been considered for SWCS.}
    \label{tab:summary}
\end{table}

\subsection{Large-scale benchmarks of ADAPT-VMPE}

We present large-scale results for Complex A~\cite{McFarland2019,tarocco2025aegiss}. We used Heisenberg ADAPT-VMPE, in which the new gate was always added at the beginning, appending to our initial state corresponding to the solution of the Hartree-Fock equations. We considered active space sizes resulting in 28, 40, 52, 64, 76, 88, and 100 qubits. For all active space sizes considered, we ran the ADAPT-VMPE algorithm for 200 iterations for 100 qubits, 300 iterations for 88 qubits, and 400 iterations for all other active spaces. As a reference, we employed the Density Matrix Renormalization Group (DMRG) method~\cite{White1992, DMRG_CHEM, DMRG_CHEM_2}. Using an adequate bond dimension, DMRG yields energies close to CASCI/FCI quality. Such methods quickly become unfeasible for large active spaces. DMRG is therefore a suitable classical benchmark for assessing the convergence behavior of our ADAPT-VMPE simulations. For additional details on the computational setup, see Appendix~\ref{sec:molecules}.
As shown in Fig.~\ref{fig:tld-large}, the energy converges systematically toward the reference energies, confirming the effectiveness of the algorithm. Some performance improvements are applied with a negligible impact on quality to speed up the simulations. The cumulative time for generating the results is presented in Fig.~\ref{fig:tld-large}. From the plot, we see that the 100-qubit ansatz with $\approx 100$ gates can be generated in $\approx 100$ hours. From the plot, we can confirm that ADAPT-VMPE indeed scales polynomially with the number of qubits. 

\begin{figure}
\includegraphics[]{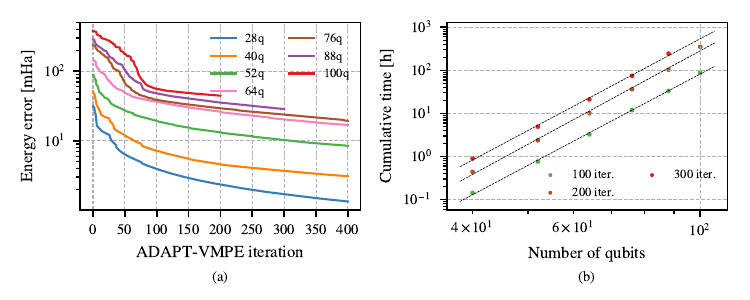}
\caption{\textbf{Heisenberg ADAPT-VMPE simulations performed for active space sizes ranging between 28 and 100 qubits for Complex A.} The MP cutoff is set to 6 for all simulations. A pool reduction strategy is applied, combining GGF selection with faster gradient selection as described in the Methods section. Active rotations are included in the ansatz, and the whole ansatz is optimized after adding each gate using the L-BFGS-B optimizer. In (a), the energy error is plotted with respect to the energy computed by DMRG. In (b), the scaling of computation time with the number of qubits is presented. The cumulative time scales polynomially for all ADAPT iteration counts considered.}
\label{fig:tld-large}
\end{figure}

The ADAPT-VMPE algorithm is executed with cutoff $c=6$ for Majorana Propagation. We verified the ans\"atze at iterations 100 and 300 (except for 88 and 100 qubits) for each active space size with higher cutoffs $c=8$ and $c=10$. The Table~\ref{tab:diff-cutoff} presents the differences between cutoffs 6 and 8, and between 8 and 10. The latter typically gives values an order of magnitude smaller than the former. In addition, the differences between consecutive cutoffs are positive (with the exception of the 76-qubit case at 100 iterations), suggesting that the energy computed during ADAPT-VMPE was in fact mostly \textit{upper bounding} the true energy of the ansatz. This effect is consistent across all considered active space sizes and confirms the high quality of the produced circuits.

\begin{table}[]
\begin{tabular}{@{}lllllllll@{}}
\toprule
 &  & \multicolumn{7}{c}{Qubits} \\ \midrule
 &  & 28 & 40 & 52 & 64 & 76 & 88 & 100 \\ \midrule
\multirow{2}{*}{100 iter.} & c6-c8 & 0.0178 & 0.0181 & 0.0554 & 0.3125 & $-$0.1778 & 0.4159 & 1.4753 \\
 & c8-c10 & 0.0012 & 0.0008 & 0.0028 & 0.0183 & 0.0895 & 0.0392 & 0.0574 \\
 \addlinespace[3pt]
\multirow{2}{*}{300 iter.} & c6-c8 & 0.0274 & 0.0420 & 0.0044 & 0.2819 & 0.1028 & 0.0005 & --- \\
 & c8-c10 & 0.0013 & 0.0008 & 0.0039 & 0.0251 & 0.0879 & 0.0006 & --- \\ \bottomrule
\end{tabular}
\caption{\label{tab:diff-cutoff} \textbf{Differences between the energies computed with MP at different cutoffs, in mHa.} c6-c8 and c8-c10 denote the difference between cutoffs 6 and 8 and between cutoffs 8 and 10, respectively. We analyzed ADAPT-VMPE circuits with 100 and 300 gates for various active space sizes for Complex A. The difference between cutoffs 8 and 10 is typically an order of magnitude smaller than between cutoffs 8 and 6. To speed up energy computation, in addition to the standard MP truncation, we employed coefficient-magnitude truncation with threshold $\tau=10^{-8}$.}
\end{table}

Whether in the NISQ or FT era, it is vital to decrease the overall cost of running the circuits. For the NISQ era, a typical choice is to limit the number of two-qubit gates, which are the main source of errors. As the constructed circuit consists of Fermionic operators, we need to select a suitable F2Q mapping and transpile the qubit circuit on the machine. For this purpose, we use the treespilation procedure described in~\cite{miller2024treespilation} for both all-to-all and heavy-hex topologies. Before transpiling the ansatz, we remove the active rotations and dress them to the Hamiltonian, transforming it into a different second-quantized Hamiltonian, which allows further resource savings. Results are presented in Fig.~\ref{fig:compilation}, and we can see that for all active space sizes, even the most complex circuits do not exceed $\approx 8500$ CNOTs for both topologies. These numbers are of the order of what was recently achieved by the IBM Quantum Nighthawk processor~\cite{ibmDeliversQuantum}. For the FT era, the typical choice is to minimize the non-Clifford T gates. The cost is estimated by transpiling all parameters using the Ross-Selinger method, which provides the optimal T count in the worst case scenario~\cite{ross2014optimal}. In Fig.~\ref{fig:compilation}, we present the T count for the previously produced results for Complex A. The T count for the most complex 100-qubit ansatz is at most $28,\!000$, which is an order of magnitude smaller than what is typically needed for QPE~\cite{casares2022tfermion}. We also dress the active rotations to the Hamiltonian in the FT setting, as the modified Hamiltonian is simpler than the original one, as shown in Appendix~\ref{appendix:compilation}.

\begin{figure}
\includegraphics[width=\linewidth]{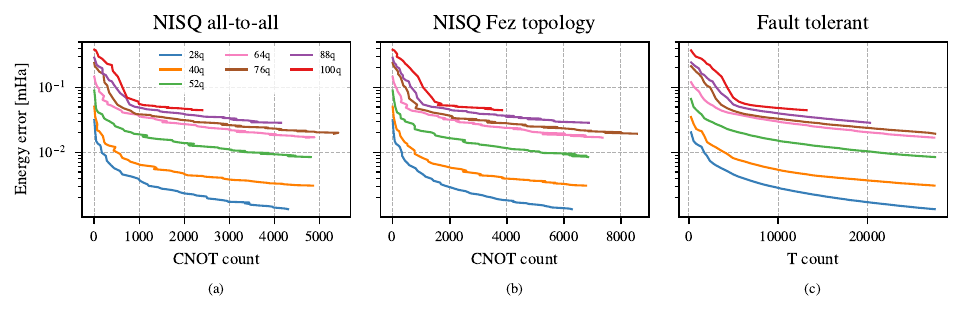}
\caption{\textbf{Energy error vs resource cost of ADAPT-VMPE outcomes transpiled for NISQ and FT eras.} (a) all-to-all connectivity minimizing 2-qubit gates for the NISQ era, (b) \texttt{ibm\_fez} heavy-hexagonal connectivity minimizing 2-qubit gates for the NISQ era, and (c) fault-tolerant scenario minimizing T count.}
\label{fig:compilation}
\end{figure}

Like the typical VQE method, our approach is designed for minimizing the energy $E\coloneqq \bra{\psi}H\ket{\psi}$ for a given Hamiltonian $H$. However, the overlap of the constructed state with the ground state also has significant implications, in particular for algorithms like QPE. When searching for the ground state, convergence of the energy $E$ to the ground state energy via the variational principle implies that the overlap between $\ket{\psi}$ and the ground state approaches 1. This holds when methods like QPE can reach chemical precision on their own. However, when chemical precision is not achieved, the energy alone may not be sufficient to deduce the overlap itself. Therefore, computing the overlap with the target state may give additional insight into the state obtained. In Fig.~\ref{fig:vmpe-vs-dmrg}, we show the overlap analysis of the produced states computed using the formula described in the Methods section. As shown in the left plot, the lower bound on the overlap increases monotonically towards the optimal value of 1. We also show that the total number of degrees of freedom scales polynomially to reach a constant lower bound on the overlap.

\begin{figure}
\includegraphics[width=\textwidth]{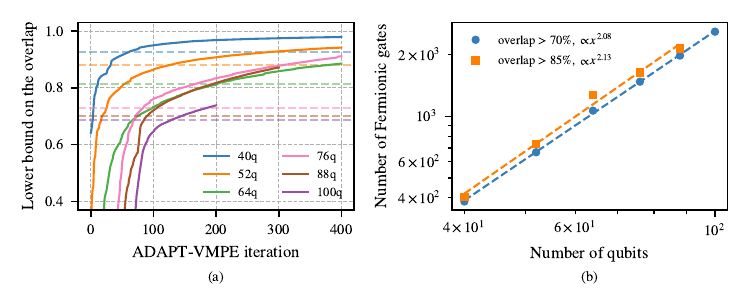}
\caption{\textbf{Analysis of overlap of the produced states with respect to the ground state for Complex A.} Subfigure (a) shows that the lower bounds on the overlap increase monotonically with iterations. For reference, the dashed lines represent the exact overlaps of the Hartree-Fock determinant and the ground state computed with DMRG. In (b), we compare the scaling of the number of Fermionic gates for ADAPT-VMPE to reach 70\% and 85\% overlap for various active spaces.}
\label{fig:vmpe-vs-dmrg}
\end{figure}

\section{Discussion}\label{sec:discussion}

We presented the ADAPT-VMPE algorithm, an entirely classical algorithm that constructs a Fermionic circuit approximating the ground state of a second-quantized operator. Using Majorana Propagation, the algorithm can run using only classical resources, converging towards the target ground state in polynomial time. Our method does not require a GPU, which makes it a particularly inexpensive procedure to run. Even for the 100-qubit example of Complex A, our algorithm can generate a circuit with $\approx$75\% overlap with the target state in only $\approx$350 hours, within only $200$ iterations.

ADAPT-VMPE is not merely a combination of techniques that happen to fit each other: we have analyzed each aspect (parameter optimization, evolution direction, pool gate selection) to derive a combination that yields a very high quality algorithm for state preparation, while keeping in mind recent F2Q development and transpilation techniques, whether for 2D connectivity like heavy-hexagonal or all-to-all connectivity for scaling analysis. Our algorithm is a natural fit for Majorana Propagation, which additionally allows high-quality expectation value approximation for the circuits we use, providing a smooth landscape for optimization. 
The produced circuits are agnostic to the architecture, allowing selection of the compilation method and/or quantum computer at a later stage. Since the output is defined in the Fermionic space, one can adjust the F2Q mapping at any point, using, e.g., treespilation~\cite{miller2024treespilation} to tailor it to the quantum device. We showed that even for the largest active space size, the number of 2-qubit operations is of the order of those currently runnable on quantum computers~\cite{ibmDeliversQuantum}. The Fermionic circuits can also be compiled for fault-tolerant machines. The most complex circuits produced for 100 qubits require at most $\approx250,\!000$ T gates if active rotations are included in transpilation, or $28,\!000$ otherwise. Both figures are orders of magnitude smaller than what is typically needed for QPE. For comparison, benchmarks of various QPE implementations indicate that for a $\approx50$ qubit problem, QPE already requires $\approx 5.3\cdot 10^{12}$ T gates~\cite{casares2022tfermion}. Recent studies suggest that for phase estimation, properly transformed Hamiltonians allow further reducing the number of Toffoli gates (each implementable with 7 T gates and some Clifford gates) to $10^8$ for $\approx$100 qubit Hamiltonians~\cite{low2025fast}. Since the cost of implementing ADAPT-VMPE compared to QPE itself is negligible, we claim the circuits produced by ADAPT-VMPE greatly improve upon at least the naive choice of the HF state as an initial state for QPE.

We would like to stress that other methods exist for state preparation using only classical resources. In particular, one can mention synthesizing Matrix Product States and Sum of Slater determinants into circuits. Both of these methods produce a wavefunction in a form not directly runnable on a quantum computer, typically using DMRG~\cite{DMRG_CHEM} or the SCI algorithm~\cite{ogSCI, cipsi}. Comparing against these algorithms is an interesting direction for future research.

\section{Methods}

\subsection{Recap of Majorana Propagation}
Inspired by prior work on Pauli Propagation~\cite{rall2019simulation, rudolph2025pauli}, Majorana Propagation (MP) was introduced in~\cite{miller2025simulation} as an algorithmic framework to efficiently estimate expectation values of Fermionic observables propagated through Fermionic circuits and has since then been extensively used and extended in the literature \cite{danna2025majorana,alam2025fermionic, bako2025fermionic, facelli2026fast,rudolph2026thermal}. The algorithm proceeds by applying successive, controlled truncations during the Heisenberg evolution of the observable.
Its core insight is to use ``monomial length'' as a principled truncation criterion.
For unstructured circuits, high-length Majorana monomials are exponentially unlikely to contribute to expectation values, and backflow from high-length to lower-length monomials is quadratically suppressed.
These properties yield concrete performance guarantees: the approximation error decays exponentially with the truncation threshold, and computing expectation values to a fixed error for typical circuit ensembles requires only polynomial resources.

To ensure we are in the regime where MP's performance guarantees apply, we impose the following conditions on the circuit and observable.
For the circuit, we consider instances being products of exponentiations of monomial generators $M_{\gamma_j}$ drawn uniformly at random from the set of length-4 monomials, with angles independently sampled from $[0,2\pi]$.
For the observable, we assume a randomly chosen second-quantized molecular Hamiltonian.
While these assumptions may appear restrictive and not tailored to brickwork-style circuits such as tUPS~\cite{burton2024accurate} or Jastrow~\cite{motta2023bridging} ans\"atze, they accurately model random compilation settings such as qDRIFT~\cite{campbell2018random}.
Notably, ADAPT-VQE aligns closely with this regime: the algorithm effectively produces an unstructured circuit whose generator pool is dominated by length-4 monomials, and the target observables are unstructured second-quantized Hamiltonians subject to symmetries.

Concretely, given an observable $H=\sum_{\nu} c_\nu M_{\nu}$ expressed in the Majorana basis, an initial Fock basis state $\varrho=\ket{n_1\dots n_n} \bra{n_1\dots n_n}$, and a unitary Fermionic circuit $C_L(\vec{\theta}\,)$, the MP algorithm approximates the expectation value
\begin{equation}
    \tilde f_L(\vec{\theta}\,) \approx f_L(\vec{\theta}\,) = \langle H \rangle_{\vec{\theta}} = \Tr[\varrho \, C_L(\vec{\theta}\,)^{\dagger}H C_L(\vec{\theta}\,)],
\end{equation}
where
\begin{equation}
    \label{eq: UL}
    C_L(\vec{\theta}\,) = \prod_{j=1}^L U_j = \prod_{j=1}^L e^{-\text{i} \theta_j M_{\gamma_j}/2}.
\end{equation}
The unitary $C_L$ consists of $L$ Fermionic operators, where each gate $U_j$ is generated by a Majorana monomial $M_{\gamma_j}$ and $\theta_j\in\mathbb{R}$ is a gate parameter.
Here we use the shorthand notation of greek letters to represent the bitstrings that specify hermitian monomials, e.g., $M_\nu = \mathrm{i}^{|\nu|(|\nu| - 1)/2} m_1^{\nu_1} m_2^{\nu_2} \cdots m_{2N}^{\nu_{2N}}$ where $\nu \in \{0,1\}^{2N}$, where $|\nu| \coloneqq \sum_i \nu_i$ is Majorana monomial length.
The MP algorithm back-propagates each monomial in $H$ through the gates of $C_L$.
For each monomial in $H$ there are three scenarios for the propagation through a gate:
\begin{equation}
    M_\nu \xrightarrow{U_j}
    \begin{cases}
        M_\nu                                                                      & \text{if } [M_\nu, M_{\gamma_j}] = 0,                             \\
        \cos(\theta_j)\, M_\nu + \mathrm{i} \, \sin(\theta_j)\, M_{\gamma_j} M_\nu & \text{if } \{M_\nu, M_{\gamma_j}\} = 0, \\
        \cos(\theta_j)\, M_\nu                                                     & \text{if } \{M_\nu, M_{\gamma_j}\} = 0 \text{ and truncated}.
    \end{cases}\label{eq:mp-dispatch}
\end{equation}
In other words, if the propagated monomial $M_\nu$ commutes with $M_{\gamma_j}$, then $M_\nu$ is unchanged.
Otherwise, $M_\nu$ branches and produces a new monomial $\propto M_{\gamma_j} M_\nu$, and if the resulting monomial exceeds a truncation threshold, it is discarded.
At each stage, we update the observable iteratively, starting with $H_0 = H$ by applying the transformation $H_{j+1} = e^{-\mathrm{i} \, \theta_{L-j} M_{\gamma_{L-j}}/2} H_j e^{\mathrm{i} \, \theta_{L-j} M_{\gamma_{L-j}}/2}$,
where we collect the newly generated monomials to construct the evolved observable $H_{j+1}$.
After propagating all monomials in $H$ through the circuit $C_L$, we compute the expectation value by evaluating the surviving monomials on the initial state $\varrho$.
The key insight is to express a Fock-basis state in the Majorana basis, which reveals the structure of non-zero overlaps.
We express the Fock-basis projector $\ketbra{n_1 \dots n_N}$ in the Majorana basis as
\begin{equation}
    \ketbra{n_1 \dots n_N}
    = \frac{1}{2^{N}}\Big(1+
    \sum_{i\leq 1}(-1)^{n_i} \overline{m}_{i}+
    \sum_{1 \leq i < j \leq N}(-1)^{n_{i} + n_{j}} \overline{m}_{i}\overline{m}_{j}
    +\cdots\Big),
\end{equation}
where $\overline{m}_{j} = -\mathrm{i} m_{2j-1} m_{2j}$ represents the paired Majorana operators for orbital $j$.
We refer to products of these paired operators, $\prod_j\overline{m}_{j}$, as \textit{paired monomials}, which form the natural basis for expressing Fock-basis states.
Given the evolved observable $H_L=\sum_{\nu} c_\nu M_{\nu}$ and Fock state $\varrho  = \sum_\mu b_\mu M_{\mu}$, the expectation value becomes
\begin{equation}
    \begin{aligned}
        \tilde f_L(\vec{\theta}) & = \Tr[\varrho H_L]                                                   = \frac{1}{2^N} \sum_{\nu, \mu} c_\nu b_\mu \, \Tr[M_{\mu} M_{\nu}]  = \sum_{\nu} c_\nu b_\nu,
    \end{aligned}
\end{equation}
where we have used the orthogonality relation for hermitian Majorana monomials: $\Tr[M_{\nu} M_{\mu}] = 2^N \delta_{\nu, \mu}$.
This simplification shows that only monomials present in both the evolved observable and the initial Fock-basis state contribute to the expectation value, effectively filtering out all other terms.
Thus, rather than computing a sum over the exponential number of terms in the state, we need only consider the paired monomials that overlap with the state.

Although operator propagation methods are usually presented in the Heisenberg picture, an equivalent formulation is obtained by forward-propagating the gates through the state and directly computing the overlap with the Hamiltonian.
In Appendix~\ref{app:schrodinger equivalence}, we show that, under monomial-length truncation, the expectation values produced by Heisenberg- and Schrödinger-picture evolution are identical. Consequently, all performance guarantees remain unchanged. 
This equivalence is useful, as it allows gates to be appended to either the front or the back of the circuit without affecting the approximation error.

The energy landscape with respect to each parameter $\theta_j$ remains of the form $A\sin(x+B)+C$, as in the case of the original gate $U_j$~\cite{ostaszewski2021structure}. This is important as it allows use of the GGF formulas for gate selection when Majoranic pool is used~\cite{feniou2025greedy}. In addition, since truncating based on monomial length does not change the branching structure, one can create a fixed \textit{surrogate graph} that, once constructed, allows fast repetitive evaluation of $\tilde f$ and its gradient for different parameters. 
This graph can be constructed in both Heisenberg and Schr\"odinger pictures with no impact on the approximation error. One can also modify the truncation techniques to improve the approximation and/or speed up the surrogate graph construction. Further details and an intuitive explanation of why Majorana Propagation works are reviewed in Appendix~\ref{sec:majoranaprop}.

\subsection{Choice of the pool}
The choice of pool has a significant impact on ADAPT-VQE and ADAPT-VMPE procedures, affecting both convergence and the final resource cost. Existing pools can be split into two categories: those acting locally in the Fermionic space, such as Fermionic~\cite{grimsley2019adaptive}, Fermionic-spin~\cite{tsuchimochi2022adaptive}, or Majoranic~\cite{miller2024treespilation} pools, and those acting locally in the qubit space, such as QEB~\cite{yordanov2021qubit} or qubit-ADAPT~\cite{tang2021qubit}, or CEO~\cite{ramoa2025reducing} pools. As shown in the Results section, the former category yields better convergence, which is intuitively expected as these pools naturally inhabit the Fermionic space. However, when Jordan-Wigner mapping is used, Fermionic pool elements are on average $\order{N}$-local in qubit space. The qubit-local pools were proposed as a remedy, as each gate is at most a 4-local qubit operator.

With recent circuit compilation techniques, this discrepancy no longer holds. For all-to-all connectivity, one can select a ternary-tree mapping~\cite{jiang2020optimal} that guarantees a Fermionic gate is at most $\order{\log N}$-local in qubit space, giving negligible overhead over the bounded locality of qubit-local pools. For 2D-like structures such as grids or heavy-hexagonal lattices, the four qubits acted on by a QEB or qubit-ADAPT operator are typically separated by a distance of $\order{\sqrt N}$ from each other, resulting in an $\order{\sqrt N}$ CNOT cost per gate. For Fermionic-local pools, the same scaling is achieved with a properly selected Bonsai mapping~\cite{miller2023bonsai}, and can be further optimized with the treespilation procedure, making the CNOT cost strictly below that of QEB or qubit-ADAPT pools~\cite{miller2024treespilation}. In conclusion, qubit-local pools offer no meaningful per-gate advantage over Fermionic-local pools, which combined with better ADAPT-VQE convergence motivates use of the latter. 

\subsection{Complexity analysis of ADAPT-VMPE}

ADAPT-VMPE provably runs in polynomial time in both the number of qubits and the number of iterations $K$. The key components, and at the same time the most time-consuming steps, are preprocessing of the ansatz into the Majorana Propagation framework to allow fast energy evaluation, evaluating pool elements to select the most effective gate, and parameter optimization. Assuming Majorana Propagation with fixed cutoff $c\geq 6$, the preprocessing of the ansatz takes $\order{N^{c}K}$ time. The same complexity holds whether a surrogate graph is constructed in the Heisenberg or Schr\"odinger picture. For pool selection, evaluating each gate requires $\order{N^{c}K}$ time, resulting in a total time complexity of $\order{N^{c+4}K}$ since the pool size is $\order{N^4}$. For parameter optimization, single energy and gradient evaluations using the technique presented in Appendix~\ref{sec:majoranaprop} each take $\order{N^{c}K}$, to be multiplied by the number of optimization steps. Since the total number of optimization steps is not expected to grow as $\order{N^{5}}$, in the worst case the algorithm scales as $\order{N^{c+4}K}$, which is polynomial in $N$ and linear in $K$ for fixed cutoff $c\geq 6$. In practice, the scaling is much more favourable in $N$; in particular, our analysis of Fig.~\ref{fig:tld-large} shows that the time scales as $\order{N^{7.07}}$ for 300 iterations.

\subsection{Details on large-scale simulations}

To generate circuits for Complex A, we used ADAPT-VMPE as described above. We used Majorana Propagation with cutoff $c=6$, with the original length truncation, and with paired acceptance modification. We used the L-BFGS-B algorithm for parameter optimization. New gates were always added to the beginning of the ansatz, right after the Hartree-Fock determinant implementation.

\paragraph{Performance improvements}
Although ADAPT-VMPE is provably polynomial, the runtime can be further accelerated with only a negligible impact on quality. First, one can show that for a Majoranic gate from the pool, exchanging a Majorana in the monomial with another one acting on the same spin-orbitals has no effect on the evolved state up to the sign of the parameter, when the gate acts on a computational basis state, an effect similar to the one discovered in~\cite{ryabinkin2020iterative}. This allows reducing the pool size by 8 in Heisenberg ADAPT-VMPE, since new gates are always added at the beginning of the circuit. Furthermore, we observed that for both gradient and GGF selection, many highly ranked candidate gates at a given iteration tend to remain highly ranked in subsequent iterations. Taking advantage of this, we introduced a pool trimming strategy in which the full pool is evaluated only at selected iterations, while a smaller subset of $5\%$ (for 28 qubits) and $1\%$ (for other active space sizes) of the full pool is used for 25 steps, after which pool trimming is repeated on the full pool. Since our implementation of gradient selection is much faster but less accurate than GGF selection, we use the former for trimming the pool and the latter for selecting the gate to be added in intermediate steps. Further explanation and investigations can be found in Appendix~\ref{sec:adapt-vmpe-performance}.

\paragraph{Transpilation} ADAPT-VMPE produces circuits of the form described in Eq.~\eqref{eq:adapt-vmpe-output}, consisting of the Hartree-Fock ansatz, Fermionic gates, and active rotations. In principle, we only need to synthesize the Hartree-Fock gates and the gates added during ADAPT-VMPE iterations. The active rotations can be implemented in the Heisenberg picture on the observable, transforming it into a different second-quantized operator. To compile the remaining part of the circuit for a quantum computer, we first select a suitable F2Q mapping and then transpile the qubit circuit on the machine. Below we describe the compilation procedure used to generate the data for Fig.~\ref{fig:compilation}; remaining technical details can be found in Appendix~\ref{appendix:compilation}.

For NISQ transpilation, we used the treespilation procedure~\cite{miller2024treespilation}, which optimizes the CNOT count over tree-based F2Q mappings. Since the Ternary Tree mapping is a special case of the Bonsai mapping, it guarantees implementing ADAPT-VMPE circuits in $\order{K \log N}$ CNOTs for all-to-all connectivity. For 2D structures such as grids or heavy-hexagonal lattices, optimization is performed over Bonsai mappings whose underlying ternary tree is a subgraph of the heavy-hexagonal graph. In this setting, circuits can be implemented using $\order{K \sqrt N}$ CNOTs, a scaling related to the height of the ternary tree~\cite{miller2023bonsai}. 

For the FT era, the goal is to minimize the non-Clifford T gates. Since any $\exp(i \theta P)$ for a Pauli operator $P$ can be transformed into a $Z$-rotation via Clifford operations, the T-count depends only on the parameter $\theta$ and not on $P$, meaning the transpilation cost is independent of the selected F2Q mapping. We used the Ross-Selinger method~\cite{ross2014optimal} to transpile all $Z$ rotations in the circuit. 

\paragraph{Overlap analysis}

While Majorana Propagation allows faithful energy estimation, it does not allow direct computation of the overlap, since Fock states are linear combinations of exponentially many Majorana monomials. For this reason, we estimate lower bounds on the overlap of the produced state with the ground state of Hamiltonian $H$ using the formula  
\begin{equation}\label{eq:overlap-bound}
    |\braket{\psi}{ E_0}|^2  \geq \begin{cases}
     \frac{S_1-\bra \psi H \ket{\psi}}{S_1-E_0} - \frac{p}{\lambda_2}\frac{S_1-S^\top_1}{S_1-E_0}, &S^\top_1 <S_1,\\
     \frac{S_1-E}{S_1-E_0} - \frac{p}{\lambda_P} \frac{S_1-S^\top_1}{S_1-E_0}, & \rm otherwise,
    \end{cases}
\end{equation}
where $E_0$ is the approximate ground energy, $S_1$ is the approximate first excited singlet energy from the correct particle sector, $S_1^\top$ is the approximate first excited non-singlet energy, and for $H_P$ being a nonnegative observable whose null space is spanned by singlet states of the correct sector, $p\coloneqq \bra \psi H_P \ket{\psi}$, and $\lambda_2$ ($\lambda_P$) are the smallest (largest) positive eigenvalues. A proof of the formula can be found in Appendix~\ref{appendix:overlaps}. In general, $H_P$ takes the form
\begin{equation}
    H_P = A (\hat N - N_{\rm exp})^2  + B \hat S_z ^2 + C \hat S^2,
\end{equation}
where $A,B,C>0$ are penalty constants, $\hat N$ is the number operator, $N_{\rm exp}$ is the number of particles in the system, $\hat S_z$ is the $z$-spin operator, and $\hat S^2$ is the total spin operator. One can set $B=0$ since breaking $z$-spin symmetry already implies breaking total spin $\hat S^2$ for singlet states. For our calculations, we set $A=1$, $B=0$, and $C=\frac{4}{3}$, giving $\lambda_2 = 1$ and $\lambda_P = \max(N_{qubits} - N_{exp}, N_{exp})^2 + \frac{2}{3}N$. Bounds on the eigenvalues are sufficient in place of their exact values. Values for $E_0$, $S_1$, and $S_1^\top$ were approximated with DMRG calculations and are listed in Appendix~\ref{sec:molecules}. The quantities $p$ and $\bra \psi H \ket \psi$ were computed with MP at cutoff 6.

The HF contribution is estimated by examining the weight of the corresponding configuration state function/Slater determinants in the reference DMRG MPS. All information related to the DMRG setup and calculations is reported in Appendix~\ref{ssec:tld_dmrg}. 

\backmatter

\bmhead{Acknowledgments} Work on ``Quantum Computing for Photon-Drug Interactions in Cancer Prevention and Treatment'' is supported by Wellcome Leap as part of the Q4Bio Program. We acknowledge EuroHPC Joint Undertaking for awarding the projects EHPC-REG-2025R05-171 and EHPC-DEV-2025D30-140 access to Leonardo at CINECA, Italy. 

The authors would like to thank Stefan Knecht, Zolt\'an Zimbor\'as, and Guillermo Garc\'ia P\'erez for valuable discussions and comments on the manuscript, and Ludmila Botelho and Roberto Di Remigio Eik\aa s for assistance with the coding process.

\bmhead{Author contributions} \"OS, AN, AM and AG conceived the ADAPT-VMPE method. RC, \"OS, AN, AM and AG implemented the ADAPT-VMPE method. RC, \"OS, AN, and AG performed the numerical computations related to ADAPT-VMPE. RC, FP, PH, MS, FT prepared Hamiltonians for the TLD-1433 derivative and analyzed its spectrum. AG directed the work. All authors participated in discussions and in the writing of the manuscript. 

\bmhead{Competing interests} Elements of this work are included in patent applications filed by Algorithmiq Oy currently pending with the European Patent Office and/or the United States Patent and Trademark Office.

\bmhead{Data and code availability} 
The data that support the findings of this article are available on Zenodo under~\cite{zenodo}. The code used to obtain these data cannot be made publicly available because it contains commercially sensitive information.

\bibliography{mp-adapt-vqe.bib}

\appendix 
\section{Molecules description}\label{sec:molecules} 

\subsection{H-chain} Hydrogen chains are widely used as benchmark systems because they combine strong, tunable electron correlation with a simple, scalable geometry~\cite{chan2002, Motta2017, Motta_2020}, and are generally used for benchmarking quantum algorithms for state preparation. The most common form of a hydrogen chain is defined by selecting the desired number of hydrogen atoms and placing them equidistantly in a linear configuration with bond distance $r$. In standard settings, the number of electrons equals the number of hydrogen atoms. In our case, we selected a chain of eight hydrogen atoms separated by a bond distance $r=1.5$ Å. The initial SCF calculation was performed with the cc-pVTZ~\cite{ccpvtz} basis set. To recover additional correlation, we performed an MP2~\cite{mp2} calculation on top of the SCF one. From the obtained wavefunction, we built the 1-RDM and computed the natural orbitals, restricted to the virtual space, defining a set of Frozen Natural Orbitals (FNOs). The active space is then defined by combining canonical HF orbitals for the occupied space with the leading FNOs for the virtual space. In total, the active space is composed of 8 electrons in 8 orbitals, i.e., 16 qubits. Additionally, the particle number operator was added as a penalty with a coefficient of 0.1. The reference energy, obtained with a CASCI calculation, is $-9.057128$ Ha.

\subsection{Complex A}\label{ssec:tld_dmrg}
Complex A used in this work is TLD-1411 (rac-[Ru(bpy)$_2$(IP-3T)]Cl$_2$)~\cite{McFarland2019}, employed in place of TLD-1433 for computational practicality. TLD-1411 is a ruthenium(II) polypyridyl complex and a very close structural analogue of TLD-1433 (rac-[Ru(dmb)$_2$(IP-3T)]Cl$_2$), a drug currently undergoing human clinical trials for the treatment of non-muscle-invasive bladder cancer. The two compounds share an almost identical molecular scaffold and highly similar electronic structures. The only structural difference lies in the ancillary ligands: TLD-1411 contains bpy (2,2'-bipyridine) ligands, while TLD-1433 contains dmb (4,4'-dimethyl-2,2'-bipyridine) ligands.

The presence of methyl substituents in the dmb ligand introduces minor electronic and steric effects, leading to very similar absorption properties, excited-state energetics, and photochemical pathways relevant to their therapeutic activity. The qualitative trends obtained for TLD-1411 are therefore expected to be directly transferable to TLD-1433, without affecting the generality of the conclusions.

The geometry of TLD-1411 was optimized in water using the Conductor-like Polarizable Continuum Model (C-PCM)~\cite{TRUONG1995253, Barone1998}. Calculations employed the PBE0 functional~\cite{PBE0_1, PBE0_2} together with the def2-TZVP basis set~\cite{def2tzvp} in the Orca code~\cite{ORCA}. The identification of the relevant occupied molecular orbitals used to construct the active spaces required for a reliable description of the excitations investigated in this work was carried out following the procedure outlined in Ref.~\cite{tarocco2025aegiss}. The resulting active spaces include orbitals originating from two main regions of interest within the molecule: the ruthenium center and the elongated thiophene tail (further details can be found in Ref.~\cite{tarocco2025aegiss}). For the occupied space, we employed canonical Unrestricted Hartree-Fock orbitals obtained from a $\Delta$-SCF~\cite{gill2008} calculation using the def2-SVDP~\cite{def2svpd} basis set in water and the cc-PVDZ-pp pseudopotential~\cite{pp_transcl}. The virtual space was defined using Frozen Natural Orbitals generated from an unrestricted MP2 calculation, tailored to the specific canonical occupied orbitals. In particular, we selected active space sizes composed of 14, 20, \ldots, 50 orbitals, resulting in 28, 40, \ldots, 100 qubits, with 14, 20, \ldots, 50 particles.

All DMRG reference energies and approximated eigenstate estimates, reported in Tab.~\ref{tab:tld-1433-ref-s0} and Tab.~\ref{tab:tld-1433-overlap-eigenvalues} respectively, are obtained using the \texttt{Block2}~\cite{block2_1} implementation of the algorithm. Given the unrestricted nature of the orbitals, the non-spin-adapted DMRG mode \texttt{SymmetryType.SZ} was used for all calculations. We stress that the reference energies are approximations of the true ground state given a specific bond dimension. The true ground state may lie slightly lower than the reported reference energies. Energies in Tab.~\ref{tab:tld-1433-ref-s0} are computed using state-specific DMRG targeting the lowest singlet state, imposing \texttt{spin=0}. 

To compute the lower bound on the overlap as described in Appendix~\ref{appendix:overlaps}, we require approximations of several eigenstates of the Hamiltonians. The values reported in Tab.~\ref{tab:tld-1433-overlap-eigenvalues} are obtained in a mixed fashion. The two singlet roots S$_0$ and S$_1$ are obtained by performing a two-root state-averaged DMRG calculation followed by refinement with state-specific DMRGs using an orthogonality projector (to prevent the second root from collapsing onto S$_0$). Two further eigenstates have been computed: the lowest triplet state $T_1$ and the lowest quintet state $Q$. For these, we run two independent state-specific DMRG calculations targeting the lowest solution with \texttt{spin=2} for T$_1$ and \texttt{spin=4} for the quintet $Q$. 

Since Majorana monomials can break all number, $z$-spin, and $S^2$ symmetries, $S_1^\top$ was selected to be the smaller of the T$_1$ and $Q$ state energies. Other possible candidates for $S_1^\top$ could include doublet states with one extra or one missing electron; however, even-length Majorana monomials from the Majoranic pool~\cite{miller2024treespilation} construct states that are superpositions of Fock states of the same parity as the Hartree-Fock state, so we omit these states from our analysis. Given that Ru(II) polypyridyl complexes are low-spin d$^{6}$ closed-shell systems, their lowest excitations correspond to single-electron transitions of singlet and triplet spin multiplicities. Other multiplicities would require higher excitation manifolds and lie significantly higher in energy. We verified this computationally, confirming that no low-lying quintet or septet states are present.

\begin{table}[]
    \centering
    \begin{tabular}{@{}c@{\hspace{2cm}}c@{\hspace{2cm}}c@{}}
    \toprule
    Active space & $\chi$ & S$_0$ \\\midrule
    14e14o (28q) & 2000 & $-3441.833359$  \\
    20e20o (40q) & 2000 & $-3441.852417$  \\
    26e26o (52q) & 2000 & $-3441.891128$  \\
    32e32o (64q) & 2500 & $-3441.947764$  \\
    38e38o (76q) & 3000 & $-3442.039072$  \\
    44e44o (88q) & 3000 & $-3442.087434$  \\
    50e50o (100q) & 4000 & $-3442.181423^{\dagger}$ \\
    \bottomrule
    \end{tabular}
    \caption{\textbf{Reference DMRG energies (in Ha) for S$_0$ calculated with different bond dimensions ($\chi$) for different active spaces.} Active spaces are denoted as number of electrons, number of spatial orbitals, and number of qubits in parentheses. ($\dagger$): Energy reference obtained using optimal orbital reordering~\cite{DMRG_CHEM_2}, allowing escape from local minima. }
    \label{tab:tld-1433-ref-s0}
\end{table}

\begin{table}[]
    \centering
    \begin{tabular}{@{}cccccc@{}}
    \toprule
    Active space & $\chi$ & S$_0$ & S$_1$ & T$_1$ & Q \\\midrule
      14e14o (28q)& 750 & $-3441.833345$ & $-3441.629474$ & 	$-3441.629250$	& $-3441.421192$ \\
    20e20o (40q) & 1000 & $-3441.852286$ & $-3441.711514$ & $-3441.711676$ & $-3441.508494$ \\
    26e26o (52q) & 1500 & $-3441.890803$ & $-3441.750981$ & $-3441.751480$ & $-3441.549015$ \\
    32e32o (64q) & 1500 & $-3441.945401$ & $-3441.818441$ & $-3441.820526$ & $-3441.647886$ \\
    38e38o (76q) & 1500 & $-3442.030816$ & $-3441.902673$ & $-3441.907115$ & $-3441.733918$ \\
    44e44o (88q) & 1500 & $-3442.075085$ & $-3441.946636$ & $-3441.952030$ & $-3441.781578$ \\
    50e50o (100q) & 2000 & $-3442.169287^{\dagger}$ & $-3442.045607^{\dagger}$ &$-3442.013096^{*}$ & ---  \\
    \bottomrule
    \end{tabular}
    \caption{\textbf{Reference DMRG energies (in Ha) for different multiplicity states calculated with different bond dimensions ($\chi$) for various active spaces.} Active spaces are denoted as number of electrons, number of spatial orbitals, and number of qubits in parentheses. ($\dagger$): S$_0$ and S$_1$ were obtained using optimal orbital reordering~\cite{DMRG_CHEM_2}, allowing escape from local minima. (*): For this specific root, the lowest energy was obtained with standard ordering, leading to an unusual S$_0 <$ S$_1 <$ T$_1$ state ordering. For the 100q case, following the trend of previous active spaces for which T$_1$ is the lowest non-singlet root, we omitted the Q$_1$ calculation.}
    \label{tab:tld-1433-overlap-eigenvalues}
\end{table}

\section{Electronic Structure Problem}\label{sec:esp} 

In the second quantization representation, the Fock space of $N$ spin-orbitals (or Fermionic modes) is spanned by the set of basis states or Slater determinants, $|D\rangle = |n_1, n_2, \dots, n_N\rangle$, where $n_p \in\{0,1\}$ denotes the occupation number of the $p$-th spin-orbital. 
The Fermionic creation and annihilation operators that act on this space are defined as
\begin{align}
a_p^\dagger \, |n_1, n_2, \dots, 0_p, \dots, n_N\rangle 
= (-1)^{\sum_{q=1}^{p-1} n_q} \, |n_1, n_2, \dots, 1_p, \dots, n_N\rangle,
\end{align}
and 
\begin{align}
a_p \, |n_1, n_2, \dots, 1_p, \dots, n_N\rangle
= (-1)^{\sum_{q=1}^{p-1} n_q} \, |n_1, n_2, \dots, 0_p, \dots, n_N\rangle
\end{align}
respectively. 
These operators satisfy the anticommutation relations
 \begin{equation}
\{a_p, a_q\} = \{a_p^{\dagger}, a_q^{\dagger}\} = 0, 
\qquad 
\{a_p, a_q^{\dagger}\} = \delta_{pq}.
\end{equation}
The \textit{Fermionic vacuum state}
 $\ket{\mathrm{vac}_f}$ is a unique state satisfying
$a_p \ket{\mathrm{vac}_f} = 0$ for all $  p = 1, \ldots, N$.
All other basis states, collectively forming the Fock space, are generated by applying creation operators to the vacuum:
\begin{equation}
\ket{n_1 n_2 \ldots n_N}
    := \prod_{p=1}^{N} (a_p^\dagger)^{n_p} \ket{\mathrm{vac}_f}.
\label{eq:FockBasis}
\end{equation}

The $N$-mode second-quantized Fermionic Hamiltonian can be expressed in terms of creation and annihilation operators as
\begin{align}
H=\sum_{p, q=1}^{N} h_{pq} a_p^{\dagger} a_q+\sum_{p, q, r, s=1}^{N} h_{p q r s} a_p^{\dagger} a_q^{\dagger} a_r a_s,
\end{align}
where $h_{pq}$ and $h_{p q r s}$ are the one-electron and two-electron integrals, respectively. 

Alternatively, the Fermionic space can be defined using Majorana operators $\{m_k\}_{k=1}^{2N}$ where
\begin{equation}\label{eq:majorana}
    m_{2p-1}:= a^{\dagger}_p + a_p, \qquad m_{2p}:= \text{i}(a^{\dagger}_p - a_p).
\end{equation}
The $2N$ operators are algebraically independent, meaning that no two distinct products of Majorana operators, when ordered by index, yield the same operator. These operators are unitary, self-adjoint, and satisfy $\{m_p,m_q\} = 2 \delta_{pq}$. Products of Majorana operators $M_\nu = \text{i}^{r_\nu} m^{\nu_1}_1m^{\nu_2}_2 \cdots m^{\nu_{2N}}_{2N}$ can be represented uniquely up to a sign by a binary vector $\nu=(\nu_1,\dots,\nu_{2N})$ where $\nu_i \in \{0,1\}$, and $r_\nu = |\nu|(|\nu|-1)/2$, where $|\nu|$ is the Hamming weight of $\nu$, called its length.  The resulting $4^N$ Hermitian operators $\{M_\nu\}_\nu$, referred to as Majorana monomials, form a basis for Hermitian operators. 

Using Majorana monomials, the $N$-mode second-quantized Fermionic Hamiltonian can be expressed as
\begin{equation}
H=\sum_{p, q=1}^{2N} \text{i}c_{pq} m_im_j+\sum_{p, q, r, s=1}^{2N} c_{p q s r} m_p m_q m_r m_s, 
\end{equation}
for real coefficients $c_{p q}$ and $c_{p q s r}$. This is a direct consequence of the relationship given in Eq.~\eqref{eq:majorana}. The expression can be restated as
\begin{equation}\label{eq:mham}
H = \sum_\nu \alpha_\nu M_\nu,
\end{equation}
where $M_\nu$ are Majorana monomials of length 2 or 4.

\section{Majorana Propagation}\label{sec:majoranaprop}

\paragraph{Surrogate Graph}
One key aspect of the operator-length truncation criterion is that it enables the construction of a so-called surrogate graph.
This graph symbolically tracks the propagation of monomials through the circuit, allowing different parameters $\vec{\theta}$ to be plugged in for rapid evaluation without re-evolving the entire operator for each parameter choice.
This idea has been leveraged in prior work on Pauli propagation~\cite{rudolph2023classical}; here, we use it to substantially accelerate the optimisation of gate parameters.
The surrogate graph is possible because low-length monomials dominate the contribution to the expectation value, and the truncation criterion is independent of the gate parameters.
This contrasts with propagation schemes that rank terms by coefficient magnitude~\cite{ryabinkin2020iterative, steiger2024sparse, ryabinkin2025optimization}: in those cases, the surrogate graph would depend on the gate parameters and would need to be reconstructed for every parameter vector, resulting in a non-continuous landscape that complicates parameter optimization.
Our method, after precompilation of the surrogate graph, allows for fast evaluation of expectation values and provides analytically provable convergence guarantees~\cite{miller2025simulation}.

\paragraph{Truncation criteria}

A central ingredient in MP is the truncation criterion used during propagation to curb the exponential growth in the number of terms.
This choice directly impacts both efficiency and accuracy.
MP uses \textit{operator length} as its truncation rule, defined as the number of Majorana operators appearing in a monomial:
monomials whose length exceeds a chosen threshold are discarded during propagation.

The original analysis of MP~\cite{miller2025simulation} shows that for unstructured circuits composed of random length-4 monomial generators, high-length monomials are exponentially unlikely to affect expectation values.
The intuition is twofold. First, in a system with $N$ spin orbitals (or $2N$ Majorana modes), there are $\binom{2N}{\ell} \sim \mathcal{O}(N^{\ell})$ monomials of length $\ell$, which grows polynomially with $N$ for fixed $\ell$.
Thus, operator length induces a natural hierarchy of polynomial subspaces whose dimension increases exponentially with $\ell$.
Second, consider the overlap of some monomial with Fock-basis states.
Only \emph{paired} monomials have non-zero overlap, and for a given even length $2\ell$, there are only $\binom{N}{\ell} \sim \mathcal{O}(N^{\ell})$ paired monomials, a vanishing fraction of the $\binom{2N}{2\ell}$ monomials of length $2\ell$ as $N$ grows.
Consequently, high-length monomials rarely overlap with the Fock-basis state and contribute negligibly to the expectation value.

Beyond this exponential suppression, MP also exhibits a \textit{quadratic backflow} suppression from high-length to low-length monomials during propagation.
Even when high-length terms are generated, their influence on the low-length subspace (which dominates the expectation value) is quadratically damped.
Together, these effects justify operator length as a robust truncation criterion: discarding high-length monomials greatly reduces computational cost while maintaining a controlled approximation error.
The analysis in~\cite{miller2025simulation} provides rigorous guarantees under this scheme: the approximation error decays exponentially with the truncation threshold, and computing expectation values to a fixed error for typical circuit ensembles requires only polynomial resources.
These guarantees assume unstructured circuits composed of random length-4 monomial generators and random second-quantized Hamiltonians, assumptions closely aligned with ADAPT-VQE settings.

In addition to the length-based truncation criterion, one can also consider the following alternative schemes:
\begin{itemize}
    \item \textbf{Generalised length truncation}:
    Analogous to Pauli weight in qubit systems, where $\{X_i, Y_i, Z_i\}$ are anticommuting qubit-local operators each contributing one unit to the weight, we define a generalized length for Majorana monomials. On each mode $j$, the Fermionic-locally acting anticommuting operators are $\{m_{2j-1},\, m_{2j},\, m_{2j-1}m_{2j}\}$.
    The generalized length of a monomial is defined as the number of Fermionic modes it acts on, irrespective of whether the action is by a single Majorana operator or a paired operator.
    With this definition, the number of monomials of fixed generalized length $\ell$ is
    $\binom{N}{\ell} 3^\ell = \mathcal{O}(N^\ell).$
    Among these, there are $\binom{N}{\ell}$ fully paired monomials, which have standard Majorana length $2\ell$. At a fixed cutoff $\ell$, the generalized scheme therefore retains all paired monomials that would only be included at cutoff $2\ell$ under the standard Majorana-length definition. Equivalently, to include the same set of paired monomials, the standard scheme must raise its cutoff from $\ell$ to $2\ell$, increasing the total number of retained monomials from $\mathcal{O}(N^\ell)$ to $\mathcal{O}(N^{2\ell})$.
    This criterion is natural in the Heisenberg picture, where the objective is to parameterize operator growth paths from the initial observable to paired monomials that overlap with the state. 
    Under this redefinition, the same paths are retained at fixed scaling, and the theoretical analysis of Ref.~\cite{miller2025simulation} continues to apply.
    \item \textbf{Coefficient-magnitude truncation}:
          Discard monomials with $|c_\nu| < \tau$ for a chosen threshold $\tau$, thereby controlling graph growth during propagation and retaining only terms likely to contribute.
    \item \textbf{Coefficient-magnitude acceptance}:
          Retain monomials with $|c_\nu| \ge \tau$ irrespective of their length, prioritising potentially impactful contributions even when they are of high length.
    \item \textbf{Paired acceptance}:
        Retain all paired monomials, irrespective of their length.
        These are precisely the monomials that contribute to expectation values in the Heisenberg picture and are therefore always retained in our Heisenberg-picture simulations, in combination with length-based truncation.
        Although this acceptance rule permits exponentially many terms in principle, in practice it does not lead to uncontrolled growth. 
        When the circuit is generated by length-two and length-four Majorana monomial rotations, paired monomials above the length cutoff cannot transform into other paired monomials without passing through an intermediate high-length unpaired monomial. 
        Since such intermediate terms are not accepted, the propagation of paired monomials remains effectively bounded.
\end{itemize}

Applying these alternative schemes yields different efficiency–accuracy trade-offs.
Coefficient-magnitude truncation typically accelerates evaluation, but the induced truncation depends on the gate parameters, preventing a single surrogate graph from being reused across different $\vec{\theta}$.
Conversely, coefficient-magnitude acceptance can improve accuracy by keeping large contributions even at high length, at the cost of larger graphs.
When a good prior parameter $\vec{\theta}_0$ is available (e.g., from rotosolve~\cite{ostaszewski2021structure}), one can build a locally faithful surrogate.
Fix a neighbourhood $\Theta = \{\vec{\theta}:\ |\vec{\theta}-\vec{\theta}_0|\le \vec{\delta}\}$ and construct the graph by: (i) accepting monomials that exceed the acceptance threshold somewhere in $\Theta$, and (ii) truncating monomials that remain below the truncation threshold throughout $\Theta$.
This retains terms that may become significant under plausible parameter updates while discarding those unlikely to matter, enabling fast, reusable evaluations within $\Theta$. In contrast to methods from the literature~\cite{ryabinkin2020iterative, steiger2024sparse, ryabinkin2025optimization}, we do not use coefficient-magnitude truncation or acceptance to control the size of the Hamiltonian. 
In our experiments, (generalized) length truncation is the only criterion that ensures the state remains polynomial in size.

\paragraph{\label{app:schrodinger equivalence}Equivalence to Schr\"odinger picture simulation}
While MP is naturally formulated in the Heisenberg picture, it can be shown to be equivalent to a Schr\"odinger picture simulation under length truncation.
This equivalence is particularly useful when considering the addition of gates to an existing circuit, as in ADAPT-VQE, where one often prefers to add a gate at the end of the circuit rather than at the beginning.

To understand this equivalence, consider the evolution of the initial state $\varrho$ through the circuit $C_L(\vec{\theta}\,)$ in the Schr\"odinger picture.
Since $\varrho$ contains exponentially many terms, a truncation scheme is again required to control growth.
We begin by truncating the initial state $\varrho$ to include only paired monomials up to length $2\ell$.
We then propagate each monomial in $\varrho$ through the circuit $C_L(\vec{\theta}\,)$ using the same branching rules as in Equation~\eqref{eq:mp-dispatch}, but with a negative sign in front of the $\sin \theta_j$ term due to the reversed direction of propagation.
At each step, monomials whose length exceeds $2\ell$ are discarded.
After propagating through all gates, we compute the expectation value by evaluating the overlap with the observable $H$.

The equivalence between the Schr\"odinger and Heisenberg pictures under length truncation arises because both approaches effectively track the same computational paths.
In the Heisenberg picture, we track paths from the observable to the state; in the Schr\"odinger picture, we track paths from the initial state to the observable.
In both cases, the length-based truncation criterion ensures that only paths involving low-length monomials are retained.
Since only paired monomials of length up to $2\ell$ contribute to the final expectation value, both pictures consider the same set of computational paths, and consequently yield identical expectation values for any choice of parameters $\vec{\theta}$.
When coefficient-based truncation schemes are employed, however, this equivalence no longer holds, as the sets of retained paths differ between the two pictures depending on the parameter values.

\paragraph{Analytical gradient with respect to circuit parameters}
Variational quantum algorithms rely on classical optimizers to determine optimal circuit parameters.
For VQEs, gradient-based methods such as the Broyden-Fletcher-Goldfarb-Shanno (BFGS) algorithm are commonly employed.
These optimisation methods require computation of the gradient, which can be approximated using finite differences or computed exactly using parameter-shift methods for certain gate types~\cite{mitarai2018quantum}. Thanks to the surrogate graph, our approach enables analytical computation of the surrogate model's gradient.

To compute the approximate energy gradient, we express the gradient element with respect to a single parameter $\theta_j$ of the $j$-th gate as
\begin{equation}
    \frac{\partial}{\partial{\theta_j}} \tilde f(\vec\theta) = \Tr[\varrho_{j-1}\, \frac{\partial}{\partial{\theta_j}}\left(U_j^{\dagger} H_{j+1}\, U_j\right)],
\end{equation}
where $H_{j+1}$ is the observable back-propagated through gates $(j+1)$ to $L$, and $\varrho_{j-1}$ is the state forward-propagated through gates $1$ to $(j-1)$.
At the level of individual Majorana monomials, taking the parameter derivative yields the following branching rule for propagation through a single gate:
\begin{equation}
    \frac{\partial}{\partial{\theta_j}}U_j^{\dagger} M_\nu\, U_j \rightarrow
    \begin{cases}
        0                                                                           & \text{if } [M_\nu, M_{\gamma_j}] = 0,                             \\
        -\sin(\theta_j)\, M_\nu + \mathrm{i} \, \cos(\theta_j)\, M_{\gamma_j} M_\nu & \text{if } \{M_\nu, M_{\gamma_j}\} = 0, \\
        -\sin(\theta_j)\, M_\nu                                                     & \text{if } \{M_\nu, M_{\gamma_j}\} = 0 \text{ and truncated}.
    \end{cases}
\end{equation}
We implement this using a sliding-window scheme that maintains copies of both the evolved state and the Hamiltonian. The state is propagated forward up to gate $j-1$, while the observable is back-propagated from gate $j+1$ to gate $L$. 
The derivative with respect to the gate parameter is then evaluated using the expression above.
When multiple gates share the same parameter, the chain rule applies and we sum the individual contributions to obtain the total gradient.

This approach allows analytical computation of the surrogate model's gradient in time proportional to a single energy evaluation. 
Table~\ref{tab:mbs-benchmarks} reports the time spent on surrogate-graph preparation (preprocessing), energy evaluation, and joint energy-and-gradient evaluation. Preparing the surrogate graph is the dominant cost but is performed only once per parameter-optimization step. In contrast, energy and gradient evaluations are substantially faster, and computing the gradient in addition to the energy incurs at most a threefold overhead relative to energy evaluation alone, independent of system size and the number of ADAPT-VMPE iterations.

\begin{table}[]
    \centering
    \begin{tabular}{@{}cc@{\qquad}ccc@{}}
    \toprule
         \#qubits & \# monomials  & \makecell{preprocessing\\time[s]}  & \makecell{energy estimation\\time [s]} & \makecell{energy and gradient\\estimation time[s]} \\\midrule
28  & 100       & 1.0072        & 0.0136        & 0.0358 \\
28  & 300       & 3.4245        & 0.0364        & 0.1034 \\
40  & 100       & 3.7154        & 0.0340        & 0.0842 \\
40  & 300       & 11.4904       & 0.0806        & 0.1810 \\
52  & 100       & 11.5886       & 0.1401        & 0.2578 \\
52  & 300       & 30.6624       & 0.1399        & 0.4111 \\
64  & 100       & 30.2919       & 0.2346        & 0.3903 \\
64  & 300       & 68.7662       & 0.3279        & 0.6531 \\
76  & 100       & 80.6944       & 0.4539        & 0.6514 \\
76  & 300       & 150.8662      & 0.5877        & 1.2993 \\
\bottomrule
\end{tabular}
    \caption{\textbf{Time required for preprocessing, single energy computation, and single energy-and-gradient computation for various active space sizes and numbers of ADAPT gates (monomials).} We used the circuits produced by ADAPT-VMPE for Complex A. Times presented are averages over 60 runs.}
    \label{tab:mbs-benchmarks}
\end{table}

\paragraph{Efficient gradient selection for ADAPT-VQE}

The ADAPT-VQE algorithm iteratively constructs a parameterized quantum circuit by selecting gates from an operator pool that best improve the variational energy.
At iteration $L$, the algorithm selects the unitary
\[
U_L = e^{-i \theta_L M_{\gamma_L}/2},
\]
which maximizes the magnitude of the energy gradient with respect to $\theta_L$ evaluated at $\theta_L = 0$.

In the Heisenberg picture, the selection gradient is given by
\begin{equation}
\left.
\frac{\partial}{\partial \theta_L}
\Tr\!\left[ U_L \, \varrho \, U_L^{\dagger} H_{L-1} \right]
\right|_{\theta_L = 0}
=
- \Tr\!\left[ H_{L-1} \, [ M_{\gamma_L}, \varrho ] \right],
\end{equation}
while in the Schr\"odinger picture it can be written as
\begin{equation}
\left.
\frac{\partial}{\partial \theta_L}
\Tr\!\left[ \varrho_{L-1} \, U_L^{\dagger} H U_L \right]
\right|_{\theta_L = 0}
=
\Tr\!\left[ \varrho_{L-1} \, [ M_{\gamma_L}, H ] \right].
\end{equation}
In these expressions, the newly selected gate is applied to the initial state or Hamiltonian, respectively.

To evaluate the commutator $[ M_{\gamma_L}, \varrho ]$ in the Heisenberg picture, it is sufficient to retain only those terms that contribute to the trace. Assuming that only length-two and length-four Majorana operators are in the pool, and noting that a length-four Majorana generator can increase or decrease the length of a Majorana monomial in the state by at most two, capturing all terms that may appear in the evolved Hamiltonian requires setting the maximum monomial length in the initial state to $\ell + 2$, where $\ell$ is the cutoff used for the Hamiltonian. The number of terms required in the initial state, therefore, needs to be $\mathcal{O}\!\left(N^{(\ell+2)/2}\right)$ to include all monomials relevant to the gradient computation.

One can then compute the commutator $[ M_{\gamma_L}, \varrho ]$ for each gate in the pool and search for it in the evolved Hamiltonian representation with $\mathcal{O}(1)$ complexity, since the evolved operator is stored as a hash map. Thus, for cutoff-$6$, the complexity is $\mathcal{O}(K N^4)$, where $K$ is the number of elements in the pool.

In the Schr\"odinger case, we consider the commutators $[ M_{\gamma_L}, H ]$. In general, there are $\order{N^4}$ terms in second quantized Hamiltonians, so computing the commutator for each gate has complexity $\mathcal{O}(K N^4)$ for $K$ pool elements, which is independent of the cutoff.

\section{Performance Improvements to ADAPT-VMPE} \label{sec:adapt-vmpe-performance}
Gate selection is a vital and time-consuming part of ADAPT-VMPE. This follows from the fact that a potentially large description of the evolved Hamiltonian (up to $\order{N^{\rm cutoff}}$ terms) must be evaluated against $\Theta(N^4)$ pool candidates. While polynomial in principle, the time required for pool evaluation grows rapidly with the number of qubits. Below we present several non-exclusive methods by which this process can be substantially accelerated with negligible impact on circuit quality.

\subsection{Pool reduction for Majoranic pool for Heisenberg ADAPT-VMPE}
Suppose $M_\nu$ and $M_{\nu'}$ are even-length Majorana monomials acting on the same set of modes, with the same parity of odd indices, at each mode acting with only one Majorana operator.
Then there exists an even-size set $J$ such that
\begin{equation}
    M_{\nu'} = \eta\, M_\nu Q, \qquad Q := \prod_{j\in J} \bar m_j, \qquad \eta\in\{\pm 1\},
\end{equation}
where each $\bar m_j=-i m_{2j-1}m_{2j}$ is a paired Majorana operator.
Because $|J|$ is even, $Q^2=I$. Also, $Q$ is diagonal in the Fock basis, so for any Fock state $\ket{\vec n}$,
\begin{equation}
    Q\ket{\vec n}=s_J\ket{\vec n},\qquad s_J\in\{\pm 1\}.
\end{equation}
Using $e^{i\theta A}=\cos(\theta)I+i\sin(\theta)A$ for involutory Hermitian $A$,
\begin{equation}
\begin{split}
    e^{i\theta M_{\nu'}}\ket{\vec n}
    &=\left(\cos\theta+i\sin\theta\,\eta M_\nu Q\right)\ket{\vec n}
    =\left(\cos\theta+i\eta s_J\sin\theta\,M_\nu\right)\ket{\vec n} \\
    &= \left(\cos\theta+i\sin(\eta s_J\theta)\,M_\nu\right)\ket{\vec n} = \left(\cos\theta+i\sin(\pm\theta)\,M_\nu\right)\ket{\vec n} \\
    &= \left(\cos(\pm\theta)+i\sin(\pm\theta)\,M_\nu\right)\ket{\vec n} = e^{i\pm\theta M_{\nu}}\ket{\vec n},
\end{split}
\end{equation}
which shows that the effect of $M_{\nu'} $ and $M_{\nu}$ on a Fock state is the same up to the change of the phase of $\theta$.
This is the same orbit-equivalence mechanism observed in the Jordan-Wigner mapping in qubit space~\cite{ryabinkin2020iterative}.

\subsection{Pool trimming}

In the ADAPT-VMPE procedure, all gates in the predefined pool $P$ must be evaluated at each step to select the gate to be added to the ansatz. However, the size of typical pools consisting of single and double excitations grows as $\order{N^4}$ with the number of qubits $N$, making the pool evaluation step time-consuming for large systems. One possible solution is to evaluate only a subset $P_i \subseteq P$ of the pool at each step $i$. To our knowledge, few studies address this problem, except~\cite{zhang2021mutual}, which uses mutual information from classical methods like DMRG to reduce the pool, and the recent work~\cite{larose2026cost}, in which the best pool elements are retained and supplemented with random operators each iteration.

Fixed pool trimming relies on the observation that pool elements that fail to improve the current energy at a given point are unlikely to improve it in later iterations either. This allows some elements to be removed from the pool, reducing the number of candidates to inspect. This observation is illustrated in Fig.~\ref{fig:ranks}, which analyzes how the selected pool elements at each iteration performed across subsequent iterations. The ranks determined from the first iteration show that the same ansatz would have been obtained using only the best 8,000 elements. In a more extreme case, keeping only the best 250 pool elements for the first 20 iterations yields the same ansatz. As the number of iterations increases, the ranks of the selected elements (as determined from the first iteration) grow; recomputing ranks at iteration 26 again yields smaller ranks for a certain number of subsequent iterations, with the same pattern repeating at iterations 51 and 76.

\begin{figure}[htbp]
    \centering
    \includegraphics[width=0.6\textwidth]{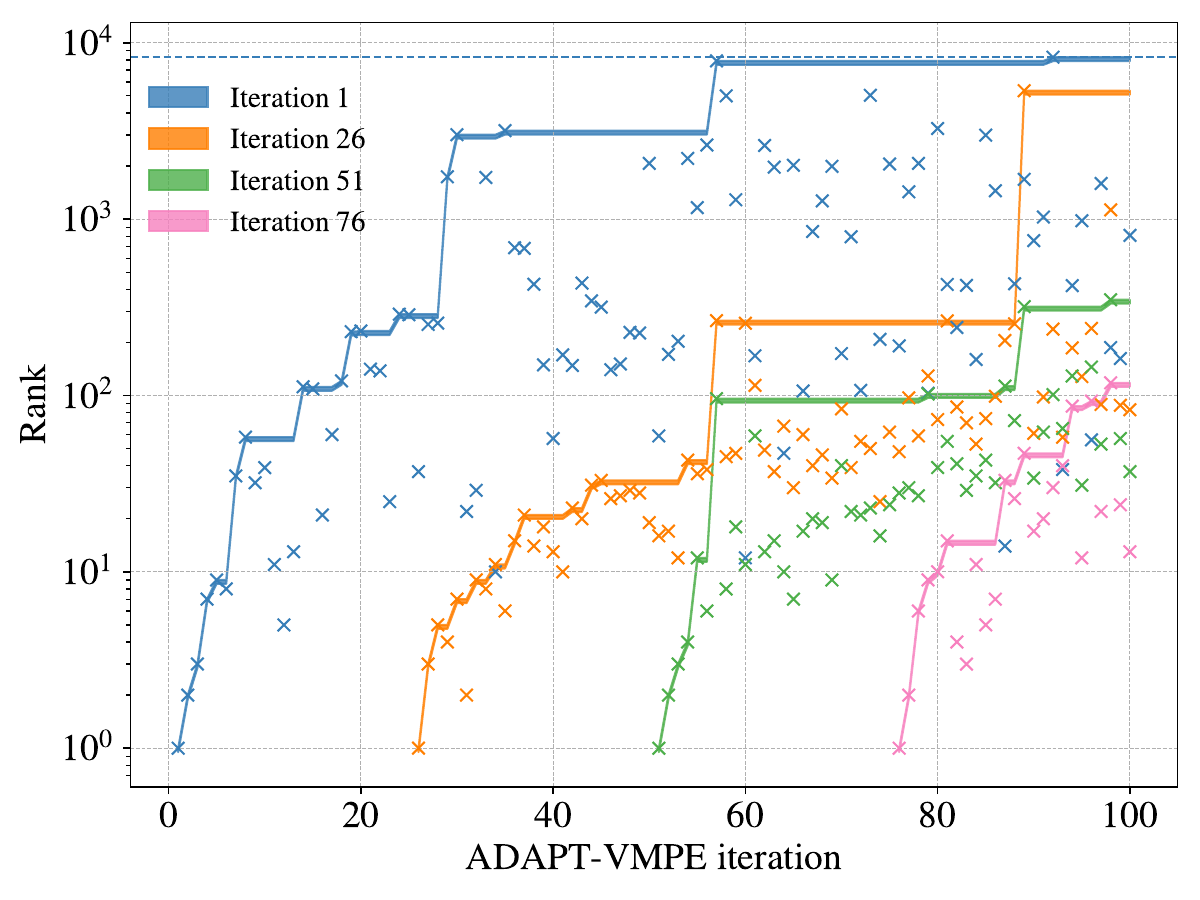}
    \caption{\textbf{Analysis of rank persistence among ADAPT-VMPE iterations for various reference ranks.} Ranks of the selected elements are computed at iterations 1, 26, 51, and 76 (rank 1 indicates the best gate). Marker placed at $(x,y)$ with color corresponding to $k\in \{1, 26, 51, 76\}$ iteration, should be understood as gate selected at $x$-th iteration was assigned a rank $y$ at $k$-th iteration.. The simulation is performed using ADAPT-VMPE with the L-BFGS-B optimizer in the Heisenberg setting with the Majoranic pool for the 40-qubit Complex A Hamiltonian. The pool contains 14,250 elements.}
    \label{fig:ranks}
\end{figure}

To implement the pool trimming procedure, we start with $P_1=P$. Each pool element $p_j \in P$ is assigned a rank $r^{1}_j\in \{1,\dots, |P|\}$ based on the evaluation results. Pool elements whose ranks exceed a threshold $\tau$ are removed, giving $\tilde P=\{p_j~|~p\in P \mbox{ and } r^{1}_j < \tau \}$. For a predetermined number of steps $k=2, 3,\dots,\kappa-1$, ADAPT-VMPE uses the trimmed pool $P_k=\tilde P$, after which the full pool is re-evaluated by setting $P_\kappa = P$, yielding a new set of ranks $\{r^{\kappa}_j\}$. This full re-evaluation is motivated by Fig.~\ref{fig:ranks}, which shows that as the iteration number grows, the ranks of selected elements also increase, indicating the need for periodic re-ranking. 

There are two hyperparameters: pool cutoff threshold $\tau$ (the number of pool elements to retain) and $\kappa$ (the frequency of full pool re-evaluation). For example, for the 40-qubit Complex A case, setting $\tau=300$ and $\kappa=25$ yields the same ansatz after 100 iterations while providing a significant reduction in the number of gate evaluations.

\subsection{Mixed selection procedure}

Two methods dominate the literature for gate selection. The first, gradient selection, prefers gates $U(\theta)$ that maximize the magnitude of the derivative of $\Tr(\varrho U^\dagger (\theta)H U(\theta))$ at $\theta=0$: the larger this value, the greater the expected improvement when the corresponding gate is added. The second, GGF selection~\cite{feniou2025greedy}, directly estimates the expected energy improvement when optimizing $\min_\theta \Tr(\varrho U^\dagger (\theta)H U(\theta))$. Depending on the generator $G$, finding the optimal $\theta$ may require evaluating the expectation value at only a few values of $\theta$. In particular, if $G$ is a Pauli string or Majorana monomial, evaluations at only 3 values are needed. A particular benefit of this method is that it not only selects a gate to be appended to the ansatz, but also proposes a good initial value for parameter optimization.

While both methods can improve the current quantum state, we found that GGF selection outperforms gradient selection, as shown in Fig.~\ref{fig:rotograd}. On the other hand, our implementation allows much faster gradient evaluation than GGF selection. This motivates using gradient selection for pool trimming (as a fast approximate assessment) and GGF selection for the intermediate gate selection steps from the trimmed pool. As shown in Fig.~\ref{fig:rotograd}, this mixed approach matches GGF selection in quality while being substantially faster. 

\begin{figure}[htbp]
    \centering
    \includegraphics[]{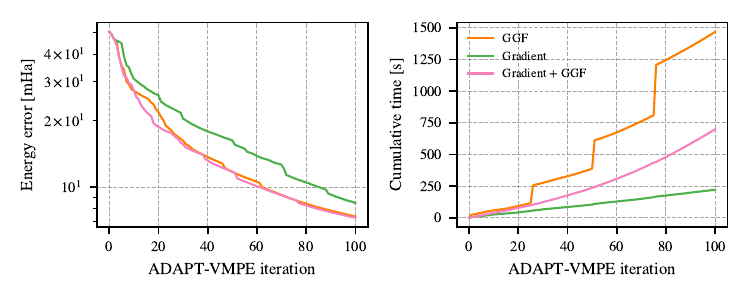}
    \caption{\textbf{Comparison of convergence and performance of GGF, gradient-based, and mixed gate selection.} The simulation is performed using ADAPT-VMPE with the L-BFGS-B optimizer in the Heisenberg setting with the Majoranic pool for the 40-qubit Complex A Hamiltonian, using fixed trimming with $\tau = 285$ and $\kappa = 25$. ``Gradient + GGF'' identifies the case where gradient selection is used only when the full pool is evaluated, and GGF selection is used in intermediate steps. The left plot shows that gradient selection alone performs worse than GGF selection and the mixed approach. Despite its quality, GGF selection alone is time-consuming, as shown in the right plot. Combining gradient and GGF, using the former for pool trimming and the latter for selection, substantially speeds up gate selection.}
    \label{fig:rotograd}
\end{figure}

\section{Transpilation cost analysis} \label{appendix:transpilation-cost-analysis}

\subsection{All-to-all connectivity} For all-to-all connectivity, implementing Pauli exponentiation $\exp(\mathrm i \theta P)$ for a $k$-local operator requires $2k-2=\order{k}$ CNOT operators, which is optimal. Both iQCC and vDBF, due to pool synthesis, can in principle select an arbitrary operator, which is why their CNOT cost may reach $\order{N}$ in the worst case. In practice, however, both algorithms are expected to select much cheaper operators. iQCC typically selects operators that are at most 4-local; these reflect Fermionic double excitation operators up to the missing $Z$ Pauli strings removed by the authors, as confirmed by the log files in~\cite{github-iqcc} for N$_2$, giving $\order{1}$ cost per gate added. For similar reasons, vDBF is expected to typically select 4-length Majorana monomials. With the ternary-tree F2Q mapping~\cite{jiang2020optimal}, each Majorana operator becomes $\order{\log N}$-local, so the product of 4 such operators is also $\order{\log N}$-local, giving a typical CNOT cost of $\order{\log N}$ for vDBF.

SWCS, when using Fermionic-spin pool, can attain benefits from proper Fermionic-spin selection, yielding a guaranteed $\order{\log N}$ CNOT cost per gate. Otherwise, when using CEO pool, a $\order{1}$ CNOT cost per gates is achieved.

ADAPT-VMPE uses only 2- and 4-length Majorana monomials by construction. As for vDBF, the typical and worst-case CNOT cost is $\order{\log N}$.

\subsection{Heavy-hexagonal connectivity} For limited connectivity, implementing $\exp(\mathrm i \theta P)$ requires $\order{k_{\rm ST}}$ CNOTs, where $k_{\rm ST}$ is the number of nodes in the Steiner tree spanned by the qubits on which $P$ acts. This complexity can be achieved either via the Steiner tree implementation of~\cite{miller2024treespilation} or via an appropriate SWAP sequence. The value $k_{\rm ST}$ may be as small as $k$, but can also be much larger; one can formally show that $k_{\rm ST} = \order{dk}$ where $d$ is the maximum inter-node distance in the connectivity graph, which is $\order{\sqrt N}$ for heavy-hexagonal or 2D grid topologies. This follows by noting that a valid Steiner tree\footnote{We use `Steiner tree' to mean any tree connecting all the terminals, not necessarily the optimal one.} can be constructed by successively connecting vertices via shortest paths and removing any resulting cycles. For randomly selected qubits, as is typical in ADAPT-VQE-like ansatz construction, this scaling is expected to hold.

From these results, iQCC requires $\order{\sqrt N}$ CNOTs in the typical scenario (and $\order{N}$ in the worst case), using the arguments from the previous subsection. Similarly, each gate added by SWCS requires $\order{\sqrt N}$ CNOTs when CEO pool is used. 

For vDBF, ADAPT-VMPE and SWCS using Fermionic-spin pool, using the optimal ternary-tree mapping produces $\order{\log N}$-local Pauli operators, requiring $\order{\sqrt{N}\log N}$ CNOTs. However, using a Bonsai F2Q mapping, it can be formally shown that only $\order{\sqrt N}$ CNOTs are required~\cite{miller2023bonsai}, and the optimal Steiner tree for such mappings can be found in polynomial time~\cite{miller2024treespilation}.

\subsection{Number of RZ gates}
Since iQCC, vDBF, and ADAPT-VMPE each add a single-Pauli-exponentiation operator per iteration, the number of RZ gates per gate added equals 1. For SWCS, either the Fermionic-spin pool is used~\cite{mullinax2025large}, which requires 8 RZ gates per operator upon decomposition, or the CEO pool~\cite{ramoa2025reducing}, which requires 4.

\section{Compilation methods for large scale benchmarks} \label{appendix:compilation}

\subsection{NISQ} For all-to-all connectivity, we used the standard treespilation algorithm with Pauli string cost as described in~\cite{miller2024treespilation}. We ran simulated annealing with a geometric schedule $T\cdot 0.9995^t$, starting from 50 and ending at 1. Once the F2Q mapping is selected, we transform the Fermionic circuit into a qubit circuit, transpile using the Steiner method from~\cite{miller2024treespilation}, and further fine-tune with the Qiskit transpiler at optimization level 2, with basis gates \texttt{cx} and \texttt{u}. Active rotations were not transpiled, as they are applied to the Hamiltonian.

For heavy-hexagonal connectivity, we used the topology of the IBM Fez quantum computer and ran treespilation to find a Bonsai F2Q mapping whose ternary tree preserves the topology. This also provides an initial layout used for the Steiner method. We then applied Qiskit transpilation with the IBM Fez topology using the same settings as for all-to-all connectivity. 

\subsection{Fault-tolerant} For FT algorithms like QPE, the cost also depends on the number of Paulis in the operator and the $L_1$ norm of the operator coefficients. Since active rotations may increase this number, it might in general be preferable to implement them as part of the state preparation circuit. However, for the molecules considered in this paper, this does not appear to be the case, as presented in Fig.~\ref{fig:l1-norm}: operators dressed with active rotations appear to be simpler than the originals. Therefore, in the main text we present the T count for the case where active rotations are dressed to the Hamiltonian, while Fig.~\ref{fig:compilation-ft} presents both variants.

\begin{figure}
    \centering
    \includegraphics[width=\linewidth]{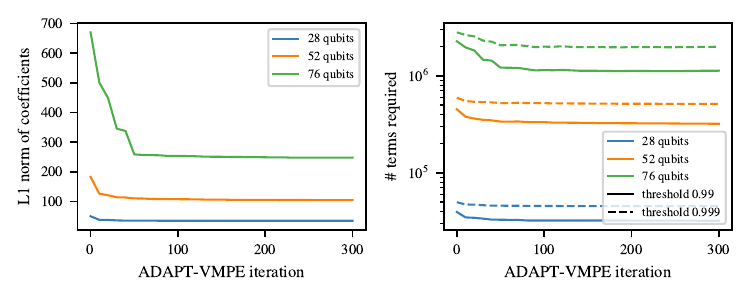}
    \caption{\textbf{Change of the $L_1$ norm of operator coefficients and the number of significant terms with consecutive actively rotated Hamiltonians.} The number of terms is computed as the number required to reach $p\|H\|$ norm for $p=99\%$ and $p=99.9\%$, where $H$ is the original Hamiltonian. Both measures are commonly used to quantify the complexity of the Hamiltonian when used in QPE. Active rotations, as optimized during the ADAPT-VMPE iterations, tend to simplify the Hamiltonian: the $L_1$ norm of the coefficients decreases substantially (left), and the number of Pauli operators required to reach 99\% and 99.9\% of the operator norm also decreases with iterations. Dressing active rotations is an exact unitary transformation of the Hamiltonian, so the spectrum is preserved.}
    \label{fig:l1-norm}
\end{figure}
\begin{figure}
    \centering
    \includegraphics[width=\linewidth]{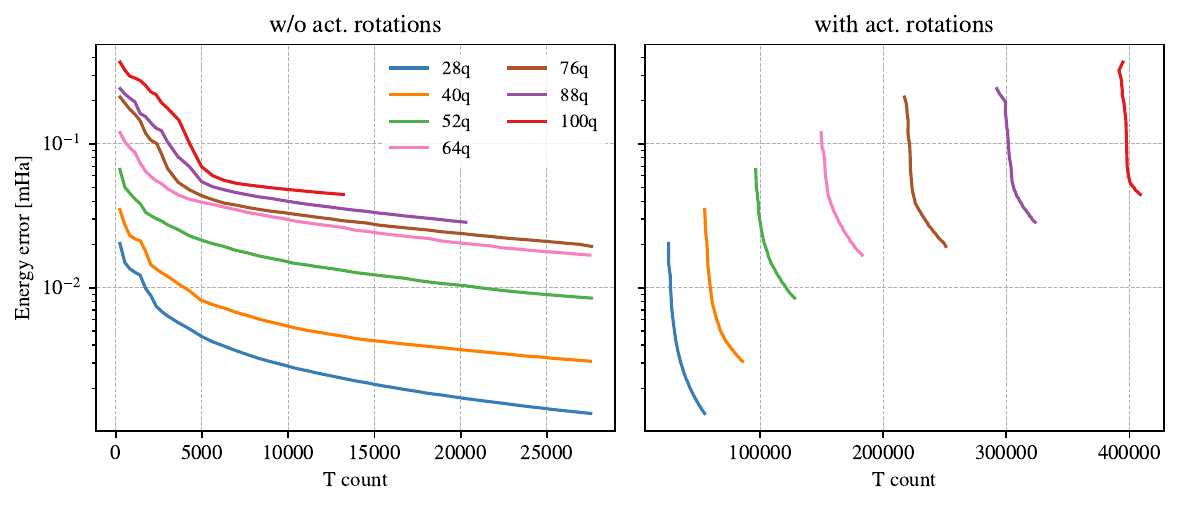}
    \caption{\textbf{Energy error as a function of T count cost for (a) without and (b) with active rotations included in transpilation.} In (a), the active rotations are dressed to the Hamiltonian; in (b), they are transpiled as part of the ansatz.}
    \label{fig:compilation-ft}
\end{figure}

Since limited connectivity and the choice of Bonsai F2Q mapping have no impact on the number of T gates (as SWAP operators are Clifford), we selected Jordan-Wigner mapping and all-to-all connectivity. We then transpiled the qubit circuit into Clifford and $Z$-rotation gates using Steiner decomposition~\cite{miller2024treespilation}, and compiled further to obtain a circuit reproducing the original state with error at most $\varepsilon=10^{-4}$. Each $Z$-rotation was decomposed into single-qubit Clifford and T gates with infidelity $\frac{\varepsilon}{\#\mathrm{RZ}}$, using \texttt{pygridsynth}~\cite{pygridsynth}, an implementation of a variant of the Ross-Selinger algorithm~\cite{ross2014optimal}. After this step, the circuit consists only of Clifford and T gates, and we computed the count of the latter.

\section{Overlap bounds} \label{appendix:overlaps}

This section is dedicated to the formal derivation of the overlap of a quantum state $\ket{\psi}$ with the ground state of a molecular Hamiltonian $H$.

\begin{theorem}\label{theorem:simple-overlap}
Let $H$ be a molecular Hamiltonian and let $\ket{E_0}, \ket{E_1}$ be the ground state and first excited state with corresponding energies $E_0, E_1$. Let $E_0 < E_1$. Let $\ket{\psi}$ be a quantum state such that $E\coloneqq \bra \psi H \ket{\psi} \in (E_0, E_1)$. Then
\begin{equation}
    |\braket{\psi}{E_0}|^2 \geq 1- \frac{E-E_0}{E_1-E_0}.
\end{equation}
\end{theorem}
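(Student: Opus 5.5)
The plan is to expand $\ket{\psi}$ in an orthonormal eigenbasis of $H$ and apply a Markov-type (Eckart) inequality to the resulting probability distribution over energies. Since $H$ is a Hermitian operator on the finite-dimensional Fock space, it has a spectral decomposition $H=\sum_k E_k \ketbra{E_k}$ with $E_0<E_1\le E_2\le\dots$. Here ``first excited state'' means that every eigenvalue other than the one belonging to $\ket{E_0}$ is at least $E_1$. I will assume the ground state is nondegenerate, which the hypothesis $E_0<E_1$ provides. I write $p_k\coloneqq|\braket{E_k}{\psi}|^2$. Normalization of $\ket{\psi}$ gives $\sum_k p_k=1$ and $p_k\ge 0$.

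The key step is to write the energy as $E=\sum_k p_k E_k$ and split off the ground-state term:
\begin{equation}
E - E_0 = \sum_{k\ge 1} p_k (E_k - E_0) \ge (E_1 - E_0)\sum_{k\ge 1} p_k = (E_1-E_0)(1-p_0).
\end{equation}
The inequality uses $E_k\ge E_1$ for all $k\ge1$ together with $p_k\ge0$. Dividing by $E_1-E_0>0$ gives $1-p_0\le (E-E_0)/(E_1-E_0)$, which rearranges to the claimed bound $|\braket{\psi}{E_0}|^2=p_0\ge 1-\frac{E-E_0}{E_1-E_0}$.

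The assumption $E\in(E_0,E_1)$ only ensures that the bound is nontrivial, i.e.\ lies in $(0,1)$; it is not needed for the inequality itself. The main point requiring care, and the only real obstacle, is the interpretation of ``first excited state''. If $E_0$ were degenerate, one would have to take $\ket{E_0}$ to mean the projector onto the ground eigenspace, or otherwise define $E_1$ as the second distinct eigenvalue with $\ket{E_0}$ a single ground eigenvector. For this reason I will state explicitly that $E_1$ is the smallest eigenvalue on the orthogonal complement of $\ket{E_0}$, so that the inequality $E_k\ge E_1$ for $k\ge 1$ holds by construction.
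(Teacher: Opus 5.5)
Your proposal is correct and follows the paper's proof essentially verbatim: expand $\ket{\psi}$ in the eigenbasis of $H$, use $E_k \ge E_1$ for $k\ge 1$ to get $E-E_0 \ge (E_1-E_0)(1-|\alpha_0|^2)$, and rearrange. Your remarks on the meaning of ``first excited state'' under degeneracy and on $E\in(E_0,E_1)$ being needed only for nontriviality are accurate clarifications that the paper leaves implicit.
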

\begin{proof}
Let $\ket{\psi}$ be of the form $\ket{\psi} = \sum_i \alpha_i \ket{E_i}$, where $\ket{E_i}$ are eigenvectors of $H$. One has
 \begin{equation}
     \begin{split}
E- E_0 & = \bra \psi H \ket \psi - E_0 = \sum_{k,m} \alpha_k^\ast \alpha_m E_m \bra {E_k} \ket {E_m} - E_0 = \sum_k |\alpha _k |^2E_k - E_0 \\
& \geq (1-|\alpha_0|^2) E_1 + |\alpha_0|^2 E_0 - E_0  = (E_1-E_0) (1 - |\alpha_0|^2),
\end{split}
 \end{equation}
 and from this
 \begin{equation}
     1 - |\alpha_0|^2 \leq \frac{E-E_0}{E_1-E_0},
 \end{equation}
 which after basic transformation and given $|\alpha_0| = | \braket{\psi}{E_0}| $ proves the statement.
\end{proof}

The above theorem is useful when the only information about the state $\ket{\psi}$ is its energy. It effectively reflects the intuition that the worst-case overlap occurs when $\ket{\psi}$ is a superposition of $\ket{E_0}$ and $\ket{E_1}$. However, solutions to molecular Hamiltonians typically belong to a subspace defined by symmetry constraints on the number, $z$-spin, and $S^2$ operators. We assume hereafter that the ground state is a singlet from a particular particle sector; the theorems can be readily adapted to other cases. We will refer to a state as non-singlet if its spin multiplicity differs from singlet or if it belongs to the wrong particle-number sector.

\begin{theorem}\label{theorem:overlap-known-alpha}
Let $H$ be a Hamiltonian and let $\ket{E_0}, \ket{S^\top_1}, \ket{S_1}$ be the ground state (a singlet state), the lowest-energy non-singlet eigenstate, and the first singlet excited state, with corresponding energies $E_0, S^\top_1, S_1$. Let $E_0 < S_1$. Let $\ket{\psi}$ be a quantum state with expectation value $E \in (E_0, S_1)$. Let $\Pi_S$ be the projection onto the space of singlet states and let $|\alpha_{S^\top}|^2 = 1- \bra \psi \Pi_S \ket{\psi}$ be the overlap of the state with the non-singlet space. Then
\begin{equation}
    |\braket{\psi}{E_0}|^2 \geq  \frac{S_1-E}{S_1-E_0} - |\alpha_{S^\top}|^2 \frac{S_1-S^\top_1}{S_1-E_0}.
\end{equation}
\end{theorem}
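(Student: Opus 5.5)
The plan is to mimic the proof of Theorem~\ref{theorem:simple-overlap}, now splitting the spectral decomposition of $\ket{\psi}$ into singlet and non-singlet parts. First I will argue that $H$ commutes with $\hat N$, $\hat S_z$ and $\hat S^2$. Hence an orthonormal eigenbasis of $H$ can be chosen in which every eigenvector is either a singlet of the correct sector or lies in the non-singlet subspace; equivalently, $H$ commutes with $\Pi_S$. Write $\ket{\psi} = \alpha_0\ket{E_0} + \sum_{k\ge1}\alpha_k\ket{E_k} + \sum_j \beta_j\ket{T_j}$. Here the $\ket{E_k}$ ($k\ge 1$) are the remaining singlet eigenstates, with energies $E_k \ge S_1$, and the $\ket{T_j}$ are non-singlet eigenstates, with energies $T_j \ge S^\top_1$. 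Then $\sum_j|\beta_j|^2 = |\alpha_{S^\top}|^2$ and $|\alpha_0|^2 + \sum_{k\ge1}|\alpha_k|^2 = 1-|\alpha_{S^\top}|^2$.

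Next I expand the energy, using the absence of cross terms because $H$ is diagonal in this basis:
\begin{equation*}
E = |\alpha_0|^2E_0 + \sum_{k\ge1}|\alpha_k|^2E_k + \sum_j|\beta_j|^2T_j \ge |\alpha_0|^2E_0 + \bigl(1-|\alpha_{S^\top}|^2-|\alpha_0|^2\bigr)S_1 + |\alpha_{S^\top}|^2S^\top_1 .
\end{equation*}
Rearranging gives $|\alpha_0|^2(S_1-E_0) \ge S_1 - E - |\alpha_{S^\top}|^2(S_1-S^\top_1)$. Since $S_1 - E_0 > 0$, dividing through yields exactly the stated bound. This holds regardless of the sign of $S_1 - S^\top_1$, because each eigenvalue is bounded only by the lowest value in its own block.

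The only genuinely delicate step is the first one: justifying that the eigenbasis splits cleanly along $\Pi_S$, so that no cross terms between singlet and non-singlet components survive in $\bra\psi H\ket\psi$. This requires $[H,\Pi_S]=0$, which follows from the symmetries of the molecular Hamiltonian. If the ground-state sector is degenerate, I would simply choose $\ket{E_0}$ as one basis vector. In that case $S_1$ would be defined as the next singlet level, and the inequality $E_k\ge S_1$ for the rest still holds by assumption. The remaining steps are routine algebra.
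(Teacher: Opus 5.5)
Your argument is correct and is essentially the paper's proof. The paper writes $\ket{\psi}=\alpha_0\ket{E_0}+\alpha_S\ket{\psi_S}+\alpha_{S^\top}\ket{\psi_{S^\top}}$, bounds $\bra{\psi_S}H\ket{\psi_S}\ge S_1$ and $\bra{\psi_{S^\top}}H\ket{\psi_{S^\top}}\ge S^\top_1$, and removes the cross terms using that $\ket{E_0}$ is an eigenvector and that $[H,\Pi_S]=0$; this is exactly the content of your $\Pi_S$-adapted eigenbasis.

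One caveat: your aside on degeneracy is wrong as stated. Suppose the singlet ground level were degenerate and $S_1$ meant the next distinct level. Then a degenerate partner $\ket{E_0'}$ of $\ket{E_0}$ has energy $E_0<S_1$, so $E_k\ge S_1$ fails for it. Worse, $\ket{\psi}=\sqrt{1-\epsilon}\,\ket{E_0'}+\sqrt{\epsilon}\,\ket{S_1}$ then breaks the theorem itself: its overlap with $\ket{E_0}$ is zero, while the bound gives $1-\epsilon$. So this case is excluded by the hypothesis $E_0<S_1$ (with $S_1$ read as the second singlet eigenvalue counted with multiplicity), not handled by a choice of basis.
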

\begin{proof}
Any quantum state can be written uniquely (up to a phase of the $\alpha$'s) in the following form
\begin{equation}
    \ket{\psi} = \alpha_0 \ket{E_0} + \alpha_S \ket{\psi_S} + \alpha_{S^\top} \ket{\psi_{S^\top}},
\end{equation}
where $\ket{\psi_S}$ is a state orthogonal to $\ket{E_0}$ satisfying $\bra{\psi_S} \Pi_S \ket{\psi_S} = 1$, and $\ket{\psi_{S^\top}}$ is the remaining part satisfying $\bra{\psi_{S^\top}} \Pi_S \ket{\psi_{S^\top}} = 0$. Assuming all states on the right-hand side are normalized,
\begin{equation}
    |\alpha_0|^2 + |\alpha_S|^2 + |\alpha_{S^\top}|^2 = 1. \label{eq:normalization}
\end{equation}
In addition,
\begin{equation}
\begin{split}
    E &= \bra \psi H \ket \psi \\
    &= (\alpha_0^\ast \bra{E_0} + \alpha_S^\ast \bra{\psi_S} + \alpha_{S^\top}^\ast \bra{\psi_{S^\top}}) H (\alpha_0 \ket{E_0} + \alpha_S \ket{\psi_S} + \alpha_{S^\top} \ket{\psi_{S^\top}}) \\
&    \geq |\alpha_0|^2 E_0 + |\alpha_S|^2 S_1 + |\alpha_{S^\top}|^2 S_1^\top  \\
&\phantom {=\ }+2\Re (\alpha_0^\ast  \alpha_{S} \bra{E_0} H \ket{\psi_{S}} )\\
&\phantom {=\ }+2\Re (\alpha_0^\ast  \alpha_{S^\top} \bra{E_0} H \ket{\psi_{S^\top}} )\\
&\phantom {=\ }+2\Re (\alpha_S^\ast  \alpha_{S^\top} \bra{\psi_S} H \ket{\psi_{S^\top}} )
\end{split}
\end{equation}
where we used $\bra{E_0}H\ket{E_0} =E_0$, $\bra{\psi_S}H\ket{\psi_S} \geq S_1$, $\bra{\psi_{S^\top}}H\ket{\psi_{S^\top}} \geq S^\top_1$. 
The last three terms all vanish. Indeed, 
\begin{equation}
    \bra{E_0} H \ket{\psi_{S}} = E_0 \bra{E_0} \ket{\psi_{S}} = 0
\end{equation}
by definition of $\ket{\psi_S}$; similarly $\bra{E_0} H \ket{\psi_{S^\top}}=0$. Finally, $\bra{\psi_S} H \ket{\psi_{S^\top}}=0$ because $[H, \Pi_S]=0$ (since $H$ is a molecular Hamiltonian), meaning singlet states are mapped to singlet states, and the singlet and non-singlet subspaces are orthogonal. This yields
\begin{equation}
\begin{split}
E &\geq |\alpha_0|^2 E_0 + |\alpha_S|^2 S_1 + |\alpha_{S^\top}|^2 S^\top_1 \\
& =|\alpha_0|^2 E_0 + (1- |\alpha_0|^2 - |\alpha_{S^\top}|^2) S_1 + |\alpha_{S^\top}|^2 S^\top_1 \\
& =S_1 + |\alpha_0|^2 (E_0 - S_1)  + |\alpha_{S^\top}|^2 ( S^\top_1 - S_1), 
\end{split}
\end{equation}
where we used Eq.~\eqref{eq:normalization} to eliminate $|\alpha_S|^2$. A simple algebraic rearrangement gives the required statement, noting that $E_0<S_1$ by assumption. 
\end{proof}
Theorem~\ref{theorem:overlap-known-alpha} provides a bound for both $S^\top_1 < S_1$ and $S^\top_1 > S_1$, improving upon Theorem~\ref{theorem:simple-overlap} in both cases. Indeed, if $S_1 < S^\top_1$, then $E_1 \equiv S_1$, making the first term of the bound equal to that of Theorem~\ref{theorem:simple-overlap}. The subtracted term then becomes negative, contributing positively to the bound for any $|\alpha|$. If $S^\top_1 < S_1$, assuming $E\in(E_0, S^\top_1)$ (the range required for Theorem~\ref{theorem:simple-overlap} with $E_1 \equiv S^\top_1$), the difference between the two bounds is
\begin{multline}
    \frac{S_1-E}{S_1-E_0} - |\alpha_{S^\top}|^2 \frac{S_1-S^\top_1}{S_1-E_0}  - \frac{S^\top_1-E}{S^\top_1-E_0}  \\= \frac{(S_1-E)(S^\top_1 - E_0) - |\alpha_{S^\top}|^2 (S_1-S^\top_1) (S^\top_1 - E_0) - (S^\top_1-E)(S_1-E_0)}{(S_1-E_0)(S^\top_1 - E_0)} \\
    = \frac{-S_1E_0- ES^\top_1+S_1^\top E_0 + S_1E - |\alpha_{S^\top}|^2 (S_1-S^\top_1) (S^\top_1 - E_0) }{(S_1-E_0)(S^\top_1 - E_0)} \\
    = \frac{S_1-S_1^\top}{(S_1-E_0)(S^\top_1 - E_0)}\big (E-E_0  - |\alpha_{S^\top}|^2 (S_1^\top - E_0)\big ).
\end{multline}
The leading fraction is positive by assumption. Let $|\beta|^2$ denote the overlap with the singlet space; since it is at least as large as the overlap with the ground state alone, it satisfies the same bound from Theorem~\ref{theorem:simple-overlap}. In addition, $|\beta|^2 = 1- |\alpha_{S^\top}|^2$. Therefore,
\begin{equation}
\begin{split}
    E-E_0  - |\alpha_{S^\top}|^2 (S_1^\top - E_0) & =  E-E_0  - (1-|\beta|^2) (S_1^\top - E_0) \\
    &= E - S_1^\top + |\beta|^2 (S_1^\top - E_0) \\
    & \geq  E - S_1^\top + \frac{S_1^\top - E}{S_1^\top - E_0} (S_1^\top - E_0) = 0,
\end{split}
\end{equation}
which confirms that the bound from Theorem~\ref{theorem:overlap-known-alpha} is tighter than that from Theorem~\ref{theorem:simple-overlap}.

Theorem~\ref{theorem:overlap-known-alpha} requires knowledge of the overlap $|\alpha_{S^\top}|^2$, which may in general be difficult to estimate. What is readily accessible is the expectation value with respect to number, $z$-spin and $S^2$ symmetries, or the corresponding penalty Hamiltonians. 
We now introduce a suitable penalty Hamiltonian.
\begin{definition}
Let $H_P$ be a Hermitian operator. We call it a $(\lambda_2,\lambda_P)$-\textit{singlet penalty Hamiltonian} if it satisfies all of the following conditions simultaneously:
\begin{enumerate}
    \item $H_P \geq 0$,
    \item the nullspace of $H_P$ is the space of all singlet states,
    \item the smallest positive eigenvalue of $H_P$ is bounded below by $\lambda_2$,
    \item the spectrum is bounded above by $\lambda_P$.
\end{enumerate}
\end{definition}

Such $H_P$ can be easily constructed by an appropriate combination of penalty Hamiltonians for the number, $z$-spin, and $S^2$ symmetries. The following theorem provides a bound when $|\alpha_{S^\top}|^2$ is unknown, but the expectation value of $\ket{\psi}$ with respect to $H_P$ is available. This value can be approximated, e.g., with the Majorana Propagation simulator or directly with a QPU.

\begin{theorem}\label{theorem:overlap-unknown-alpha}
Let $H, \ket{E_0}, \ket{S^\top_1}, \ket{S_1}, E_0, S^\top_1, S_1$ be as in Theorem~\ref{theorem:overlap-known-alpha}. Let $H_P$ be a $(\lambda_2,\lambda_P)$-singlet penalty Hamiltonian and let $p \coloneqq \bra{\psi}H_P\ket{\psi} \in [0, \lambda_2]$. Then
\begin{equation}
    |\braket{\psi}{E_0}|^2 \geq 
    \begin{cases}
     \frac{S_1-E}{S_1-E_0} - \frac{p}{\lambda_2}\frac{S_1-S^\top_1}{S_1-E_0}, &S^\top_1 <S_1,\\
     \frac{S_1-E}{S_1-E_0} - \frac{p}{\lambda_P} \frac{S_1-S^\top_1}{S_1-E_0}, & \rm otherwise.
    \end{cases}
\end{equation}
\end{theorem}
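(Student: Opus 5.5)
The plan is to reduce Theorem~\ref{theorem:overlap-unknown-alpha} to Theorem~\ref{theorem:overlap-known-alpha}. That bound is linear in $|\alpha_{S^\top}|^2$, with coefficient $-\frac{S_1-S^\top_1}{S_1-E_0}$. The remaining work is to bound $|\alpha_{S^\top}|^2$ through $p$ in the direction that keeps the inequality valid. Since $E_0<S_1$ by assumption, the sign of this coefficient is determined by the sign of $S_1-S^\top_1$. This naturally splits the argument into the two cases of the statement.

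The key step is a two-sided sandwich of $p$ in terms of $|\alpha_{S^\top}|^2$. Decompose $\ket{\psi}=\Pi_S\ket{\psi}+(I-\Pi_S)\ket{\psi}$. By condition~2 of the definition, $\Pi_S$ is exactly the projector onto the null space of $H_P$. Since $H_P$ is Hermitian, its null space and its positive spectral subspace are orthogonal and both invariant. Hence $H_P\Pi_S=0$ and the cross terms vanish, giving
\begin{equation*}
p=\bra{\psi}(I-\Pi_S)H_P(I-\Pi_S)\ket{\psi}.
\end{equation*}
On the range of $I-\Pi_S$ every eigenvalue of $H_P$ is positive. By condition~3 these eigenvalues are at least $\lambda_2$, and by condition~4 at most $\lambda_P$. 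Expanding in the eigenbasis therefore yields
\begin{equation*}
\lambda_2|\alpha_{S^\top}|^2\le p\le \lambda_P|\alpha_{S^\top}|^2,
\end{equation*}
using $\|(I-\Pi_S)\ket{\psi}\|^2=1-\bra{\psi}\Pi_S\ket{\psi}=|\alpha_{S^\top}|^2$. From this we get $|\alpha_{S^\top}|^2\le p/\lambda_2$ and $|\alpha_{S^\top}|^2\ge p/\lambda_P$.

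Now substitute into the bound of Theorem~\ref{theorem:overlap-known-alpha}. The hypotheses there are satisfied, since the statement inherits $E\in(E_0,S_1)$ from that theorem.
\begin{itemize}
\item If $S^\top_1<S_1$, the coefficient of $|\alpha_{S^\top}|^2$ is negative. Replacing $|\alpha_{S^\top}|^2$ by its upper bound $p/\lambda_2$ only decreases the right-hand side, which gives the first case.
\item Otherwise $S_1-S^\top_1\le 0$, so the subtracted term is nonpositive and the bound increases with $|\alpha_{S^\top}|^2$. Here we replace $|\alpha_{S^\top}|^2$ by its lower bound $p/\lambda_P$, which gives the second case.
\end{itemize}
The condition $p\in[0,\lambda_2]$ only ensures $p/\lambda_2\le 1$, so that the substituted quantity is a meaningful overlap bound; the argument does not otherwise use it.

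The main obstacle I expect is the sandwich step. The care needed is in justifying that $H_P$ vanishes on $\Pi_S\ket{\psi}$ and contributes no cross terms. This relies on $H_P$ being Hermitian with null space exactly the singlet space, so that the singlet space is an invariant subspace orthogonal to all eigenvectors with positive eigenvalue. The rest is a monotonicity argument with attention to the sign of $S_1-S^\top_1$.
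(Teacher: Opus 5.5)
Your proposal is correct and follows essentially the same route as the paper. You sandwich $|\alpha_{S^\top}|^2$ between $p/\lambda_P$ and $p/\lambda_2$ using that $H_P$ annihilates the singlet component of $\ket{\psi}$, then substitute into Theorem~\ref{theorem:overlap-known-alpha} according to the sign of $S_1-S^\top_1$; writing the sandwich as $\lambda_2|\alpha_{S^\top}|^2\le p\le\lambda_P|\alpha_{S^\top}|^2$ is slightly cleaner than the paper's division by $\bra{\psi_{S^\top}}H_P\ket{\psi_{S^\top}}$, since it also covers the case where the non-singlet component vanishes.
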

\begin{proof}
Observe that
\begin{equation}
\begin{split}
p & =\bra{\psi} H_P \ket{\psi} \\
&=(\alpha_0^\ast \bra{E_0} + \alpha_S^\ast \bra{\psi_S} +\alpha_{S^\top}^\ast \bra{\psi_{S^\top}}) H_P (\alpha_0 \ket{E_0} + \alpha_S \ket{\psi_S} + \alpha_{S^\top} \ket{\psi_{S^\top}}) \\
    &= |\alpha_{S^\top}|^2 \bra{\psi_{S^\top}} H_P \ket{\psi_{S^\top}}, 
\end{split}
\end{equation}
where we used $H_P\ket{E_0}=H_P\ket{\psi_S}=0$, since both states are singlet states. Since $\ket{\psi_{S^\top}}$ is a non-singlet state, $\bra{\psi_{S^\top}} H_P \ket{\psi_{S^\top}}\in [\lambda_2, \lambda_P]$ by definition of $H_P$. Therefore,
\begin{equation}
    |\alpha_{S^\top}|^2 = \frac{p}{\bra{\psi_{S^\top}} H_P \ket{\psi_{S^\top}}}
\end{equation}
gives
\begin{equation}
    \frac{p}{\lambda_P} \leq |\alpha_{S^\top}|^2 \leq \frac{p}{\lambda_2}.
\end{equation}

Since the exact value of $|\alpha_{S^\top}|^2$ is unknown, we must account for the worst possible case. This leads to different expressions depending on whether $S^\top_1<S_1$ or not. Assuming $S^\top_1 <S_1$ and using Theorem~\ref{theorem:overlap-known-alpha}:
\begin{equation}
    |\braket{\psi}{E_0}|^2 \geq  \frac{S_1-E}{S_1-E_0} - |\alpha_{S^\top}|^2 \frac{S_1-S^\top_1}{S_1-E_0} \geq \frac{S_1-E}{S_1-E_0} - \frac{p}{\lambda_2}\frac{S_1-S^\top_1}{S_1-E_0}.
\end{equation}
In the opposite case:
\begin{equation}
    |\braket{\psi}{E_0}|^2 \geq  \frac{S_1-E}{S_1-E_0} - |\alpha_{S^\top}|^2 \frac{S_1-S^\top_1}{S_1-E_0} \geq \frac{S_1-E}{S_1-E_0} - \frac{p}{\lambda_P} \frac{S_1-S^\top_1}{S_1-E_0}.
\end{equation}
These two cases give the required statement. 
\end{proof}

Both bounds in Theorem~\ref{theorem:overlap-unknown-alpha} coincide at $S^\top_1=S_1=E_1$, trivially recovering Theorem~\ref{theorem:simple-overlap}. We conclude with a generalization to the case where $S_1^\top$ is unknown.
\begin{theorem}
    Let $H, \ket{E_0}, \ket{S^\top_1}, \ket{S_1}, E_0, S_1, H_P, p$ be as in Theorem~\ref{theorem:overlap-unknown-alpha}. Then

    \begin{equation}
    |\braket{\psi}{E_0}|^2 \geq 
    \begin{cases}
     \frac{S_1-E}{S_1-E_0} - \frac{p}{\lambda_2}, &S^\top_1 <S_1,\\
     \frac{S_1-E}{S_1-E_0}, & \rm otherwise.
    \end{cases}
\end{equation}

\end{theorem}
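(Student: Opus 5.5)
The plan is to start from the bound of Theorem~\ref{theorem:overlap-known-alpha},
\begin{equation*}
|\braket{\psi}{E_0}|^2 \geq \frac{S_1-E}{S_1-E_0} - |\alpha_{S^\top}|^2 \frac{S_1-S^\top_1}{S_1-E_0},
\end{equation*}
and remove the dependence on $S^\top_1$ by bounding the factor $\frac{S_1-S^\top_1}{S_1-E_0}$ separately in the two cases. For $|\alpha_{S^\top}|^2$ we use the sandwich $\frac{p}{\lambda_P}\le|\alpha_{S^\top}|^2\le\frac{p}{\lambda_2}$, which was established in the proof of Theorem~\ref{theorem:overlap-unknown-alpha}. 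Only its upper half is needed here.

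\emph{Case $S^\top_1<S_1$.} The factor $\frac{S_1-S^\top_1}{S_1-E_0}$ is positive. Since $E_0$ is the ground energy of $H$, it is at most every eigenvalue, so $S^\top_1\ge E_0$. Hence $0< S_1-S^\top_1\le S_1-E_0$, and the factor lies in $(0,1]$. Because $|\alpha_{S^\top}|^2\ge 0$, the subtracted term satisfies
\begin{equation*}
|\alpha_{S^\top}|^2\frac{S_1-S^\top_1}{S_1-E_0}\le|\alpha_{S^\top}|^2\le\frac{p}{\lambda_2}.
\end{equation*}
Substituting this into the bound of Theorem~\ref{theorem:overlap-known-alpha} gives the first case of the statement.

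\emph{Case $S^\top_1\ge S_1$.} The factor $S_1-S^\top_1$ is now nonpositive, so the term $-|\alpha_{S^\top}|^2\frac{S_1-S^\top_1}{S_1-E_0}$ is nonnegative. Dropping it only weakens the inequality, which yields $|\braket{\psi}{E_0}|^2\ge\frac{S_1-E}{S_1-E_0}$, the second case.

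The only step needing care is the justification of $S^\top_1\ge E_0$. It should follow directly from the hypotheses inherited from Theorem~\ref{theorem:overlap-unknown-alpha}: $\ket{E_0}$ is the global ground state of $H$, and $S^\top_1$ is an eigenvalue of $H$. We must also check that $\lambda_2>0$ (true by definition of the penalty Hamiltonian) and that $S_1-E_0>0$ (assumed). Everything else is the substitution already carried out in Theorem~\ref{theorem:overlap-unknown-alpha}, so there is no real obstacle.
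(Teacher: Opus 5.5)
Your proof is correct and matches the paper's argument. In the case $S^\top_1<S_1$ both use $S^\top_1\ge E_0$ to get $\frac{S_1-S^\top_1}{S_1-E_0}\le 1$, and otherwise both drop the nonnegative correction term. The only cosmetic difference is that you start from Theorem~\ref{theorem:overlap-known-alpha} and insert $|\alpha_{S^\top}|^2\le p/\lambda_2$ afterwards, so you never need the $p/\lambda_P$ lower bound. The paper instead starts from the already-substituted bound of Theorem~\ref{theorem:overlap-unknown-alpha}.
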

\begin{proof}
    Using Theorem~\ref{theorem:overlap-unknown-alpha},
    \begin{equation}
    |\braket{\psi}{E_0}|^2 \geq 
    \begin{cases}
     \frac{S_1-E}{S_1-E_0} - \frac{p}{\lambda_2}\frac{S_1-S^\top_1}{S_1-E_0}, &S^\top_1 <S_1,\\
     \frac{S_1-E}{S_1-E_0} - \frac{p}{\lambda_P} \frac{S_1-S^\top_1}{S_1-E_0}, & \rm otherwise.
    \end{cases}
\end{equation}
For the case $S_1^\top < S_1$:
\begin{equation}
    \begin{split}
    |\braket{\psi}{E_0}|^2 & \geq \frac{S_1-E}{S_1-E_0} - \frac{p}{\lambda_2}\frac{S_1-S^\top_1}{S_1-E_0} \\
    &\geq \frac{S_1-E}{S_1-E_0} - \frac{p}{\lambda_2}\frac{S_1-E_0}{S_1-E_0} \\
    & \geq \frac{S_1-E}{S_1-E_0} - \frac{p}{\lambda_2}.
    \end{split}
\end{equation}
For the other case:
\begin{equation}
    \begin{split}
    |\braket{\psi}{E_0}|^2 & \geq \frac{S_1-E}{S_1-E_0} + \frac{p}{\lambda_P} \frac{S_1^\top-S_1}{S_1-E_0} \geq  \frac{S_1-E}{S_1-E_0}.
    \end{split}
\end{equation}
\end{proof}
If the relation between $S_1^\top$ and $S_1$ is unknown, one can always use the weaker bound $\frac{S_1-E}{S_1-E_0} - \frac{p}{\lambda_2}$.

\begin{table}[]
    \centering
    \begin{tabular}{cccc}
    \hline
     \#qubits & ADAPT-VMPE & lower bound & exact \\
     &  iterations & on overlap [\%] & overlap [\%] \\\hline
         \multirow{4}*{40 qubits}& 0 (HF) & 64.19 & 92.60 \\
         & 25 & 85.05 & 96.95  \\ 
         & 50 & 91.51 & 98.25  \\
         & 100 & 94.97 & 99.01  \\\hline
         \multirow{4}*{52 qubits}& 0 (HF) & 36.40 & 87.97  \\
         & 25 & 72.72 & 94.58 \\
         & 50 & 80.61 & 96.53  \\
         & 100 & 86.37 & 97.41  \\
         \hline
    \end{tabular}
    \caption{\textbf{Lower bound for 40 and 52 qubits and for HF and 25, 50 and 100 ADAPT-VMPE iterations.} We can see that both the lower bound on the overlap, and the exact overlap calculated with sparse-wavefunction statevector simulations of the ADAPT-VMPE circuit against DMRG outcome to dressed Hamiltonian are monotonically increasing with the number of iterations, however the latter are much larger.}
    \label{tab:overalps-comparison}
\end{table}

The lower bound computed with the Theorem above assumes the produced state is a particularly unfortunate superposition of ground and eigenstates, therefore it is expected that the real overlap is much higher that the estimated bound. We performed a separate analysis as describe below, and summarize our finding in Table~\ref{tab:overalps-comparison}. Our estimation of the exact overlap with the ground state was made by computing overlap between the DMRG solution and the sparse wavefunction simulation of circuit. The DMRG solution has been computed through the extraction of the determinants and related CI weights from the MPS wavefunction associated with the ground state of the actively rotated Hamiltonian. Following the same setup for the reference energies reported in Table~\ref{tab:tld-1433-ref-s0}, a non-spin adapted DMRG mode \texttt{SymmetryType.SZ} was used to solve the eigenvalue problem on the Hamiltonians for each of the two active spaces under analysis, i.e. 40 and 52 qubits. In particular, for the former we selected all the determinants with an absolute coefficient magnitude above $10^{-7}$, and $10^{-6}$ for the latter.

Since sparse wavefunction simulator cannot effectively generate a superposition of Slater determinants constructed with active rotations, we executed DMRG on the integrals dressed with active rotations found by ADAPT-VMPE. In order to obtain the dressed integrals, $\tilde{h}_{pq}$ and $\tilde{h}_{pqrs}$, the unitary transformation $U$ must be defined as 
\begin{equation}
    U= \prod_{k}e^{G_k \theta_k},
\end{equation}
where $G_k$ is the generator of the $k$-th orbital rotation between orbital $p$ and $q$ defined as $a_q^\dagger a_p - a_p^\dagger a_q$, as in Eq.~\eqref{eq:adapt-vmpe-output}. Then, the unitary can be applied to the original integrals, $h_{pq}$ and $h_{pqrs}$, obtaining the dressed integrals as
\begin{equation}
    \tilde{h}_{pq} = U_{ip} {h}_{ij} U_{jr}, \quad \tilde{h}_{pqrs} = U_{ip} U_{jq}{h}_{ijkl}U_{kr} U_{ls},
\end{equation}
where $h_{pq}$ is the original one-body tensor and $h_{pqrs}$ is the original two-body tensor (a more detailed explanation of the procedure can be found in Reference~\cite{wahtor2023}). In particular, the transformation $U$ is split between $\alpha$ and $\beta$ spin sectors. Following this rule, the transformation needs to be applied accordingly to the nature of the tensors, i.e. $\alpha$ transformation on $\alpha$ indexed tensors, $\beta$ transformation of $\beta$ indexed tensors.

The remaining part of the ADAPT-VMPE circuit was simulated with sparse wavefunction, truncating terms with coefficient magnitude $10^{-8}$ or less.

\end{document}